\definecolor{forestgreen}{RGB}{34, 139, 34}
\newcommand{\jobset}{\Jcal}
\newcommand{\canset}{\Ccal}
\newcommand{\binrelcj}{r(c,j)}
\newcommand{\binreljc}{\tilde{r}(j,c)}
\newcommand{\canpre}[2]{f_{#1}(#2)}
\newcommand{\jobpre}[2]{g_{#1}(#2)}
\newcommand{\concanpre}[2]{\tilde{f}_{#1}(#2)}
\newcommand{\conjobpre}[2]{\tilde{g}_{#1}(#2)}
\newcommand{\asycanpre}[2]{\bar{f}_{#1}(#2)}
\newcommand{\asyjobpre}[2]{\bar{g}_{#1}(#2)}
\newcommand{\examvec}{\mathbf{v}}
\newcommand{\examfunc}[1]{v(#1)}
\newcommand{\clickpro}[3]{\PP^{#1}_{{#2},{#3}}}
\newcommand{\swobj}[1]{\textbf{SW}(#1)}
\newcommand{\indiswobj}[2]{\textbf{U}_{#2}(#1)}
\newcommand{\swlowobj}[1]{\underline{\textbf{SW}}(#1)}
\newcommand{\dsmatrix}[2]{M^{#1}_{#2}}
\newcommand{\dsmatrices}[1]{\mathbf{M}^{#1}_{\Ccal}}
\newcommand{\applied}[3]{Y^{#1}_{{#2},{#3}}}
\newcommand{\appliedset}[2]{\Ccal^{#1}_{#2}}
\newcommand{\rank}[2]{rank(#1|#2)}
\newcommand{\rankinlist}[2]{\varphi_{#1}(#2)}
\newcommand{\rankinapplied}[3]{\mathbf{rk}^{#1}_{#2}(#3)}
\newcommand{\caninrank}[2]{\varphi^{-1}_{#1}(#2)}
\newcommand{\ranking}[1]{\sigma(#1)}
\newcommand{\priorset}[2]{A_{#1}({#2})}
\newcommand{\basisvec}[1]{\mathbf{e}_{#1}}
\newcommand{\naivepolicy}{\pi^n}
\newcommand{\optimalpolicy}{\pi^*}
\newcommand{\swname}{\emph{Social-Welfare Ranking}}
\newcommand{\naivename}{\emph{Naive-Relevance Ranking}}
\newcommand{\reciprocalname}{\emph{Reciprocal-Relevance Ranking}}
\newcommand{\ccomplement}{\tilde{c}}
\title{Optimizing Rankings for Recommendation in Matching Markets}
\date{}
\author[1]{Yi Su\thanks{ys756@cornell.edu}}
\author[2]{Magd Bayoumi\thanks{mb2363@cornell.edu}}
\author[2]{Thorsten Joachims\thanks{tj@cs.cornell.edu}}
\affil[1]{Department of Statistics and Data Science, Cornell University, Ithaca, NY}
\affil[2]{Department of Computer Science, Cornell University, Ithaca, NY}
\begin{document}

\maketitle

\begin{abstract}
%Search and recommender systems have proved highly successful in helping users navigate large collections of items. Using rankings as their primary interface, they focus the user's attention on those items that are most promising, which reduces the user's exploration cost and minimizes the chance of overlooking desired items. Given their success in domains like video streaming, online retail, and music streaming, such systems are increasingly considered also for other applications, including matching markets like job search, college admissions, and social recommendations. 

Based on the success of recommender systems in e-commerce, there is growing interest in their use in matching markets (e.g., labor). While this holds potential for improving market fluidity and fairness, we show in this paper that naively applying existing recommender systems to matching markets is sub-optimal.
% Considering the standard process where candidates apply and then get evaluated by employers, we present a recommendation framework for computing personalized rankings of jobs for each candidate, and personalized rankings of applicants for each employer. 
Considering the standard process where candidates apply and then get evaluated by employers, we present a new recommendation framework to model this interaction mechanism and propose efficient algorithms for computing personalized rankings in this setting.
We show that the optimal rankings need to not only account for the potentially divergent preferences of candidates and employers, but they also need to account for capacity constraints. This makes conventional ranking systems that merely rank by some local score (e.g., one-sided or reciprocal relevance) highly sub-optimal --- not only for an individual user, but also for societal goals (e.g., low unemployment). 
%First, recommender systems for matching markets interface with users on both sides of the market, where candidates expect a personalized ranking of jobs and employers expect a personalized ranking of applicants. Second, the preferences of employers and candidates can differ, and need to align for a successful match. And, third, employers are limited in the number of candidates they can evaluate and hire. We show that these properties can make conventional ranking systems that merely rank by the preferences an individual user highly sub-optimal --- not only for this user, but also for societal goals (e.g., low unemployment). 
To address this shortcoming, we propose the first method for jointly optimizing the rankings for all candidates in the market to explicitly maximize social welfare. 
% To address this shortcoming, we propose the first method for jointly optimizing the rankings on both sides of the market to explicitly maximize social welfare. 
% We show how to address this optimization problem, and discuss potential strategic behaviors and fairness guarantees in such a ranking system. Finally, we evaluate the methods on data from a real-world networking-recommendation system that we built and fielded at a large computer science conference.
In addition to the theoretical derivation, we evaluate the method both on simulated environments and on data from a real-world networking-recommendation system that we built and fielded at a large computer science conference. 
% In addition to the theoretical derivation, we evaluate the method on data from a real-world networking-recommendation system that we built and fielded at a large computer science conference. 
\end{abstract}

\section{Introduction}

Most search and recommender systems rely on rankings as the prevalent way of presenting results to the users. By ranking the most promising items first, they aim to focus the user's attention on a tractably small consideration set especially when the number of items is large. For conventional applications of ranking systems in online retail or media streaming, it has long been understood that ranking items based on their probability of relevance (alternatively, purchase, stream or click) to the user provides maximal utility under standard assumptions. Sorting by probability of relevance is commonly called the Probability Ranking Principle (PRP) \cite{robertson1977probability}, and it implies that the optimal ranking for any user depends only on his or her preferences. 

The PRP is no longer optimal, however, for a growing range of new online platforms that mediate matching markets like job search, college admission, dating, or social recommendations \cite{ashlagi2019assortment, liu2020competing, mladenov2020optimizing, tu2014online}. In these matching markets, the online platform acts as a mediator between both sides of the market, connecting job candidates with employers or colleges with applicants. The following two key properties of matching markets make the ranking problem substantially more complex than ranking by each individual's probability of relevance alone. First, a successful match requires that both sides of a match agree on mutual relevance \cite{pizzato2013recommending, li2012meet}. Second, both sides have constraints on how many options they can evaluate. Under these conditions, the optimal ranking for any one participant depends on the rankings and relevances of the other participants in the market, which can make PRP ranking highly suboptimal.

To illustrate the source of these dependencies, consider a job matching problem where many of the job candidates crowd to a small set of employers with high name-recognition. The popular employers will receive many applications --- more than they can carefully evaluate and many without the desired qualifications. At the same time, less well-known employers may not get discovered by relevant candidates. The result is that many candidates never hear back from the popular employers and many employers are unable to fill their positions. This is not only bad for individual employers and candidates, but also a poor outcome in terms of the social welfare that the system provides (e.g. minimizing unemployment and unfilled jobs). To remedy this problem, a recommender system for job matching should elicit preferences and qualifications from both the candidates (\emph{e.g.,} west-coast location, python) and the employers (\emph{e.g.,} 5 years experience, python) and help discover mutually beneficial matches that avoid crowding. 

The main contribution for this paper is fourfold: (1) First, the paper is the first to formalize the problem of \emph{ranking in two-sided matching markets} under the apply-accept interaction protocol that is used in many real-world settings.
%characterize the main properties in these platforms, \emph{i.e.,} the success of a matching relies on relevance from both sides and  both sides have limited attention.
% standard interaction process where candidates apply and get accepted by the other side; 
(2) Second, we present the first recommender system that is able to jointly optimize personalized rankings for all candidates to maximize social welfare in two-sided matching markets. Note that our recommendation approach leaves all participants with autonomy over their actions and eventual decisions, which makes it different from conventional matching procedures \cite{gale1962college, masarani1989existence}, where all participants have to submit to a centralized matching procedure \cite{roth1999redesign} that determines the matching outcome. Instead, our approach supports each participant in discovering the most promising options to manually evaluate, optimizing how the participants spend their scarce resource of attention among an often overwhelming set of options.  (3) Third, we discuss additional strategic behavior (e.g., adoption and retention) and fairness issues for rankings under this framework in Section~\ref{sec:fair}, which opens an important area for future work. (4) Finally, we built a real-world networking-recommendation system that we fielded at a large computer science conference%\footnote{Name withheld to preserve double blind review.}
. We empirically validate our method on this real-world dataset and will publish this data as a benchmark to enable future research.

\section{Related Work}
As \emph{multi-sided market platforms} have emerged as a popular business model, search and recommender systems have taken a key role in mediating their interactions. These systems need to balance the interests of various stakeholders, such as the users, the suppliers, and the platform itself.
There is a large body of recent work on how to specify each stakeholder's objective \cite{mehrotra2020advances}, their interplay \cite{mehrotra2018towards, wang2020fairness}, and the design of efficient joint optimization objectives for recommendation in multi-sided marketplaces \cite{rodriguez2012multiple, mehrotra2020bandit, zhang2014fairness,svore2011learning}. Important objectives include diversity \cite{clarke2008novelty, radlinski2009redundancy}, novelty \cite{vargas2011rank, ribeiro2012pareto} and fairness \cite{singh2018fairness, zafar2017fairness, biega2018equity, beutel2019fairness}. Most of the work relies on techniques from multi-objective optimization, which include finding the Pareto front \cite{parisi2014policy}, using aggregation functions that reduce to a single objective \cite{kim2006adaptive, zhang2014fairness, mehrotra2020bandit}, or including some objectives as constraints \cite{rodriguez2012multiple, singh2019policy, singh2018fairness}. Unlike most of these works, we directly model the interaction process in two-sided matching markets and aim to maximize the overall social welfare, instead of manually trading-off the interest for different stakeholders. 
%However, we also consider constraints on the utility for each individual.

Our problem setting is a form of \emph{reciprocal recommendation} \cite{pizzato2013recommending, li2012meet}, which considers matching problems where both sides have preferences like online dating \cite{alanazi2013people, tu2014online, xia2015reciprocal}, job seeking \cite{almalis2014content, yu2011reciprocal, mine2013reciprocal}, and social media \cite{zhang2011intrank}. Hence, most reciprocal recommendation systems rank by a reciprocal score that combines the preferences from both sides, like the harmonic mean \cite{pizzato2010learning}, sum of similarities \cite{almalis2014content}, product operator \cite{ting2016transfer}, and community-level matching \cite{alsaleh2011improving}. Among these works, \cite{tu2014online} is the closest to ours. They propose a two-sided matching framework for dating recommendation that maximizes the total number of reciprocated messages, while avoiding to overburden each user by keeping the expected number of messages received/sent bounded. A key simplifying assumption in \cite{tu2014online} is that the system provides a predetermined number of recommendations to each user, and that all these recommendations will be exhaustively evaluated by the user. Similar assumptions are also made by recent work on assortment planning \cite{ashlagi2019assortment} for matching problems. We consider the  realistic case where each user is shown a ranking, and there is uncertainty in how far each user goes down. 

Our work also builds upon the economic and social science literature on {\em matching markets} \cite{gale1962college, masarani1989existence}. Here, the core problem lies in the design of matching procedures. In particular, stable-matching algorithms \cite{gale1962college} take a ranked preference list from both sides of the market as input and produce stable one-to-one or many-to-one matchings, while maximum weight matching \cite{duan2014linear} considers weighted preference graphs. In contrast, our work helps the users {\em form} their preferences despite an intractably large set of possible options. In particular, we do not force the final matching step and leave all participants with autonomy over their actions and eventual decisions. This makes it fundamentally different from conventional matching procedures.
%Our work does not consider this final matching step, but rather the step of helping the users form their preferences despite an intractably large set of possible options. We are agnostic to whether these preferences are subsequently used for manual or automated matching.

\section{A Framework for Ranking in Matching Market}
\label{sec:framework}

We begin by introducing our new framework for ranking in matching markets. For simplicity and concreteness, we use a \emph{job recommendation platform} as a running example. However, the framework is general and can be adapted to other domains where (1) the success of a matching relies on relevance from both sides and (2) both sides have limited attention.

We consider the following \emph{sequential} and \emph{asymmetric} interaction protocol. First, \emph{job candidates} (proactive side) browse through their ranked lists of jobs and apply to the jobs that they find relevant. After this, \emph{employers} (reactive side) browse through rankings of their applicants and respond to the applicants they find relevant (e.g., invite to interview). We call this a match, and the goal of the ranking system is to design rankings for each candidate to maximize the total number of expected matches in the market. 

We now formalize this interaction model, and then present algorithms for jointly optimizing the rankings for all candidates in Section~\ref{method}. The toy example in Figure~\ref{fig:example} serves as an intuitive guide to the model we develop in the following. 
We denote the set of candidates in the market as $\canset$, and the set of employers as $\jobset$ with $|\canset|< \infty$ and $|\jobset|<\infty$. We use $\binrelcj \in \{0,1\}$ to denote the binary relevance of employer $j$ to candidate $c$ (\emph{i.e.,} $c$ wants to apply to $j$). Similarly, $\binreljc \in \{0,1\}$ denotes the relevance of candidate $c$ to employer $j$ (i.e., $j$ wants to interview $c$). However, the ranking system does not know the precise relevances, but it merely has access to relevance probabilities $$\canpre{c}{j}:=\PP(\binrelcj=1) \:\:\:\text{  and  }\:\:\: \jobpre{j}{c}:=\PP(\binreljc=1).$$ 
A large body of literature exists on how to estimate probabilities of relevance, and for the purposes of this paper we assume that accurate and unbiased estimates exist.

\begin{figure}[t!]
    \centering
        \includegraphics[width=\linewidth]{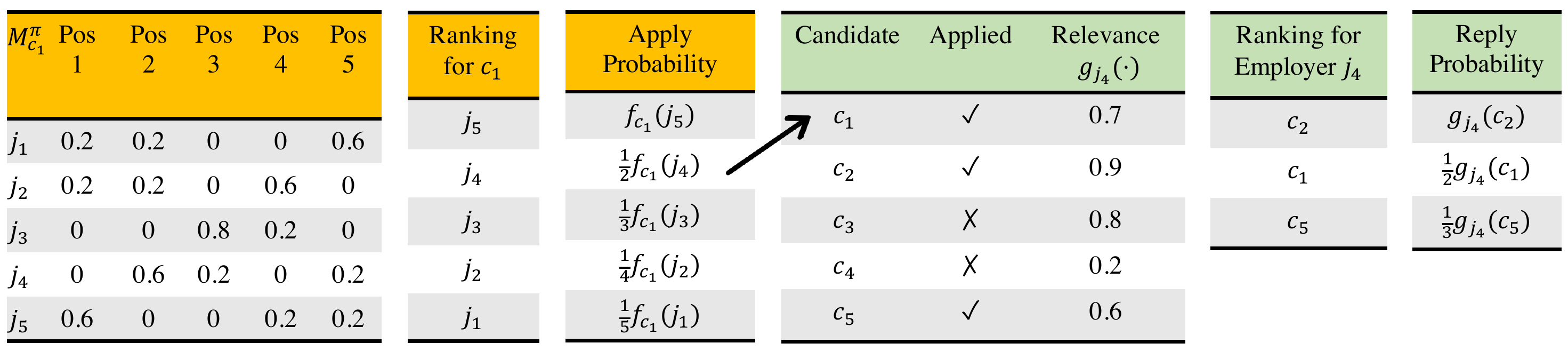}

    \caption{A two-sided matching market with $|\canset|=5$ candidates, $|\jobset|=5$ employers, and examination function $v(x)=1/x$. The yellow tables show the interaction process for a particular candidate $c_1$, while the green tables are for a particular employer $j_4$ (other candidates and employers follow an analogous process). From left to right, the system first computes a personalized stochastic ranking $\dsmatrix{\pi}{c_1}$ (the $5\times 5$ matrix) for $c_1$. Then it samples a particular ranking $(j_5, j_4, j_3, j_2, j_1)$, presents it to $c_1$, and $c_1$ applies to the employers according to the position-based model. For the employer side, one realization of the process shows $\{c_1, c_2, c_5\}$ applying to $j_4$. The system thus presents ranking $(c_2,c_1,c_5)$ to $j_4$ sorted by $g_{j_4}(\cdot)$. Finally, $j_4$ replies back also following a position-based model.}
    \label{fig:example}
\end{figure}

\subsection{Candidates Act (Proactive)}
Like in many real-world markets, candidates act first in our model. Our recommendation system is characterized by a \emph{contextualized stochastic} ranking policy $\pi: \canset \to \Delta_{\Sigma_{|\jobset|}}$, which maps each candidate to a probability distribution over rankings of the employers $\jobset$. The space of all possible rankings is denoted by $\Sigma_{|\jobset|}$. For each candidate $c$, the system samples a particular deterministic ranking $\sigma(c)$ from $\pi(\cdot|c)$, \emph{i.e.,} $\ranking{c}\sim \pi(\cdot|c)$ and presents it to the candidate $c$. 
% Like in many real-world markets, candidates act first in our model. This process starts with the system presenting a personalized ranking $\ranking{c}$ of the employers $\jobset$ to each candidate $c\in\canset$. The system produces these rankings according to a contextualized ranking policy $\pi$, which we allow to be stochastic for computational reasons. In particular, $\pi(\cdot|c): \canset \to \Sigma_{|\jobset|}$ specifies a conditional distribution over rankings for each candidate $c$, where $\Sigma_{|\jobset|}$ denotes the probability simplex over the set of all permutation matrices for $\jobset$ employers. 
% When choosing a specific ranking $\ranking{c}$ to present to candidate $c$, the system simply samples it from the ranking policy, \emph{i.e.,} $\ranking{c}\sim \pi(\cdot|c)$.  
 
After each candidate $c$ receives a ranking of employers $\ranking{c}$, they act independently by going down the ranking and applying to an employer if it is relevant. We use the position-based model (PBM) \cite{richardson2007predicting,chuklin2015click,joachims2017unbiased} from the information retrieval literature to model this application process. Given a ranking $\ranking{c}$, the PBM models the probability of candidate $c$ applying to job $j$ when given ranking $\ranking{c}$ as the product of the relevance probability $\canpre{c}{j}$ and the examination probability $\PP(e(j)=1|\ranking{c})$:
  \begin{equation}
  \PP(c \text{ applies to } j | \ranking{c}) = \canpre{c}{j} \cdot \PP(e(j)=1|\ranking{c})
 \end{equation}
The examination probability models how likely a candidate will discover a job that is placed in position $k$. In the PBM, this examination probability depends only on the rank $\rank{j}{\ranking{c}}$ of employer $j$ under ranking $\ranking{c}$. Therefore, we can rewrite the examination probability as:
 \begin{equation}
 	 \PP(e(j)=1|\ranking{c}) = \examfunc{\rank{j}{\ranking{c}}}.
 \end{equation}
$v$ is an application-dependent function. Common choices include $\examfunc{x}=1/x$ \cite{joachims2017unbiased} and $\examfunc{x}=\frac{1}{\log(1+x)}$ \cite{jarvelin2002cumulated}, or an application-dependent $v$ can be estimated directly \cite{agarwal2019estimating}.
Combining this, the probability $\clickpro{\pi}{c}{j}$ of candidate $c$ applying to employer $j$ under ranking policy $\pi$ can be expressed as:
 \begin{equation}
 \label{eq:capplies}
 \!\!\!\clickpro{\pi}{c}{j}:=\PP(c \text{ applies to } j|\pi) = \!\!\!\!\!\!\!\!\! \sum_{\ranking{c}\in\Sigma_{|\Jcal|}}\!\!\!\!\!\!\!\!\pi(\ranking{c}|c) \canpre{c}{j}v\big(\rank{j}{\ranking{c}}\big).
 \end{equation}
This expression may appear intractable at first glance, since it involves a sum over the exponential number of possible rankings.
However, note that the relevance probability $\canpre{c}{j}$ does not depend on the rank and that the examination probability $v\big(\rank{j}{\ranking{c}}\big)$ only depends on rank, such that $\clickpro{\pi}{c}{j}$ can be expressed equivalently in terms of the marginal probabilities $\PP(\rank{j}{\ranking{c}})=k|\pi)$ of employer $j$ being placed at rank $k$ under policy $\pi$. Therefore, $\clickpro{\pi}{c}{j}$ could be re-written as:
  \begin{equation}
   \clickpro{\pi}{c}{j} = \canpre{c}{j}\big(\sum_{k=1}^{|\jobset|}\PP(\rank{j}{\ranking{c}}=k|\pi)\examfunc{k}\big)= \canpre{c}{j}\big(\sum_{k=1}^{|\jobset|}\dsmatrix{\pi}{c}(j,k)\examfunc{k}\big)           
 \end{equation}
with $\dsmatrix{\pi}{c}$ denotes the doubly stochastic matrix for candidate $c$ where the $(j,k)$-th position equals $\PP(\rank{j}{\ranking{c}})=k|\pi)$. This enables us to use only $|\jobset|^2$ parameters when representing ranking policy $\pi(\cdot|c)$ for candidate $c$ via $\dsmatrix{\pi}{c}$, since all stochastic ranking policies with the same $\dsmatrix{\pi}{c}$ are equivalent. Meanwhile, we use $\dsmatrices{\pi}$ as the concatenation of $\dsmatrix{\pi}{c}$ $ \forall c\in\canset$ to denote the set of doubly stochastic matrices for all candidates. In the following, this will allow us to optimize in the space of doubly stochastic matrices instead of in the exponentially sized space of rankings. Once an optimal $\dsmatrix{\pi}{c}$ is found, we can use the Birkhoff-von Neumann (BvN) decomposition to efficiently find a stochastic ranking policy that corresponds to $\dsmatrix{\pi}{c}$ in polynomial time and thus sample a deterministic ranking and present to the candidate \cite{birkhoff1940lattice, singh2018fairness}.

To further compact the notation, we write the examination probabilities\footnote{Here, we assume user-homogeneous examination. It could be easily extended into user-heterogenous case.} as a vector $\examvec\in \RR_{+}^{|\jobset|}$ with $\examvec[k] = \examfunc{k}$ for $k\in\{1,2,3,\cdots, |\jobset|\}$, and we use $\basisvec{j}$ as the standard basis vector in $|\jobset|$-dimensional vector space, with a 1 in the $j$-th position and 0's elsewhere. Now the probability of candidate $c$ applying to employer $j$ can be written as: 
% $\clickpro{\pi}{c}{j} = \canpre{c}{j}\basisvec{j}^T\dsmatrix{\pi}{c} \examvec.$
 \begin{equation}
 \label{equ:cus_click}
 	 \clickpro{\pi}{c}{j} = \canpre{c}{j}\basisvec{j}^T\dsmatrix{\pi}{c} \examvec.
 \end{equation}

\subsection{Employers Act (Reactive)}

After collecting all the applications, it is the employers' turn to reply back.
To be specific, the ranking system will give each employer $j$ a ranked list of candidates \emph{who applied to them}, sorted by their employer-specific probability of relevance $\jobpre{j}{\cdot}$. We denote the set of candidates who applied to employer $j$ as $\appliedset{\pi}{j} \subseteq \canset$, and it is important to point out that $\appliedset{\pi}{j}$ is stochastic and the randomness comes from the candidates' application outcome under policy $\pi$.
Then each employer responds back to the candidates $c\in \appliedset{\pi}{j}$ following a similar PBM model as the one we introduced for the candidates\footnote{In general, the examination function $\examfunc{\cdot}$ could be different for candidates and employers. Here we keep it the same for simplicity of notation.}. In particular, the probability that employer $j$ replies back to candidate $c$ (e.g., invites to interview)  depends on the relevance probability $\jobpre{j}{c}$ and the examination probability. Note that the examination probability for any candidate $c$ who applied to $j$ depends on who else is in $\appliedset{\pi}{j}$, inducing a distribution over ranks. Let $\rankinapplied{\pi}{j}{c}$ denote the rank of candidate $c$ in a particular $\Ccal^\pi_j$ given $c$ applied to $j$, ranked by $\jobpre{j}{c}$. We can then write the probability of a reply as follows:
\begin{equation}
\label{equ: job_click}
\clickpro{\pi}{j}{c}:=\PP(j \text{ replies to } c|c \text{ applied to }j, \pi) =  \jobpre{j}{c}\EE\big[v\big(\rankinapplied{\pi}{j}{c}\big)\big]
 \end{equation}
The expectation is over the randomness in $\Ccal^\pi_j$, which is induced by the application process on the side of the candidates. 

\subsection{Utility and Social Welfare}

We now formulate the objective of the ranking system, which is based on the notion of successful matches. A match for a candidate/employer pair $(c,j)$ is successful, if candidate $c$ manages to discover a relevant $j$ and thus applies, and employer $j$ also finds $c$ relevant and manages to discover this candidate among the applicants. We use $\applied{\pi}{c}{j}\in\{0,1\}$ to denote whether a match is successful. We define the utility $\indiswobj{\pi}{c}$ that the ranking system provides to candidate $c$ when using ranking policy $\pi$ as the expected number of matches (e.g., interviews) that $c$ receives.
\begin{equation}
 \begin{aligned}
     \label{indi_SW}
    \indiswobj{\pi}{c}&=\sum_{j\in\jobset}\PP(\applied{\pi}{c}{j}=1)
	= \sum_{j\in\jobset} \clickpro{\pi}{c}{j} \clickpro{\pi}{j}{c} =\sum_{j\in\jobset} \big(\canpre{c}{j}\basisvec{j}^T\dsmatrix{\pi}{c} \examvec\big)\jobpre{j}{c}\EE\big[v\big(\rankinapplied{\pi}{j}{c}\big)\big]\\
	&=\sum_{j\in\jobset}\bigg(\canpre{c}{j}\jobpre{j}{c}\EE\big[v\big(\rankinapplied{\pi}{j}{c}\big)\big]\bigg)\basisvec{j}^T\dsmatrix{\pi}{c} \examvec
 \end{aligned}
 \end{equation}
After rearranging the terms, we can see that the utility of ranking policy $\pi$ for candidate $c$ is analogous to traditional ranking measures like DCG \cite{jarvelin2002cumulated}. The key difference is that the utility for each ranked item not only depends on the candidates' own relevance, but also on those of the employers and potentially all other candidates, since they affect the position in the employers' rankings. One can also define a similar utility $\indiswobj{\pi}{j}$ for each employer.

Given this \emph{interdependent} nature of the individual utilities, and the societal role that many of these matching markets play, we focus on \emph{social welfare} as the overall objective of the ranking system. We select the most straightforward definition of social welfare $\swobj{\pi}$, which is the expected number of matches in the market that ranking policy $\pi$ produces.
Note that this is equivalent to maximizing the sum of candidate utilities, or equivalently the sum of employer utilities.
\begin{equation}
\label{SW}
	\swobj{\pi}=\sum_{c\in\canset} \indiswobj{\pi}{c} = \sum_{j\in\jobset} \indiswobj{\pi}{j}
\end{equation}
Maximizing the total number of matches is arguably a sensible proxy for societal goals (e.g., minimize unemployment), although one could further refine this objective (e.g. diminishing returns for individuals \cite{nemhauser1978analysis}). In particular, an important additional consideration is ensuring the fairness of such a ranking system, which we futher discuss in Section~\ref{sec:fair} and Appendix~\ref{sec:tradeoff}.

In summary, we define the problem of \emph{Social-Welfare Maximization of Rankings for Two-Sided Matching Markets} that we consider in this paper as follows.

\begin{definition} 
Given a two-sided matching market with a proactive side $\canset$ and a reactive side $\jobset$, along with
%proactive side relevance probability $\canpre{c}{j}$ and reactive side relevance probability $\jobpre{j}{c}$ $\forall c\in \canset, j\in\jobset$, 
two-sided relevance probability $\canpre{c}{j}, \jobpre{j}{c}$ $\forall c\in \canset, j\in\jobset$, 
the goal is to design a stochastic ranking policy $\pi$ that maximizes the expected number of matches in the market.
\end{definition}

While an abstraction of the real-world, our model captures the key interactions and trade-off's faced in many modern two-sided matching markets, where (1) the number of available options is often very large, (2) both sides of the market do not have the resources to evaluate all options exhaustively, and (3) a successful transaction depends on a two-sided relevance match.  

It is worth pointing out that our ranking problem is fundamentally different from stable matching problems considered in the economics literature \cite{gale1962college, masarani1989existence}, and that stable matching procedures cannot be used to solve our optimization problem. We provide further discussion in Appendix~\ref{app:stable}

% \subsection{Comparison with Stable Matching}

% Our ranking problem is fundamentally different from stable matching problems considered in the economics literature \cite{gale1962college, masarani1989existence}. First, our ranking problem aims at assisting users in forming their preferences over the options by focusing the user's attention, in contrast to assuming that users can readily state their preferences for the matching procedure. 
% However, it still leaves open the question whether stable matching procedures could be used to solve our optimization problem. 

% To address this question, consider the following simplified \emph{top-1 ranking problem} where all participants only examine position one of their ranking (i.e., $\examfunc{1}=1$ and $\examfunc{k}=0$ for all $k$ and for both sides). Furthermore, we assume that the relevance probabilities $\canpre{c}{j}$ and $\jobpre{j}{c}$ imply a preference order for each candidate and employer. The following proposition clarifies that even in this simplified setting stable matching and social-welfare maximization are not equivalent, and the proof is provided at Appendix~\ref{app: proof}.
% \begin{restatable}{proposition}{comparemm}
% \label{compare_mm}
% 	A matching that is stable is not necessarily a social-welfare optimal top-1 ranking, and top-1 rankings that are not stable matchings can have better social welfare than matchings that are stable.
% \end{restatable}

\section{Optimizing Rankings in Matching Markets} 
\label{method}
 
We now explore algorithms for computing ranking policies $\pi$ that maximize social welfare. To motivate the need for these new algorithms, we start by theoretically quantifying the sub-optimality of conventional PRP policies that rank based on one-sided relevances. 
Denote with $\naivepolicy$ the naive relevance-based policy such that for each candidate $c$, $\naivepolicy(\cdot|c)$ sorts by $\canpre{c}{j}$ and consequently recommends job $j$ at position $\rankinlist{c}{j}$ with probability 1, where $\rankinlist{c}{j}$ denotes the rank of $j$ in $\jobset$ when ranked by $\canpre{c}{j}$. Equivalently, consider its equivalent representation using a doubly-stochastic matrix $\dsmatrices{\naivepolicy}$, which we will compare against the social welfare optimal ranking $\dsmatrices{\optimalpolicy}$.

\begin{restatable}{theorem}{subopt}
\label{thm: naive_sub}
There exist two-sided matching markets over $|\canset| = |\jobset| = n$ with examination models $\examfunc{\cdot}$ and relevance probabilities $\canpre{c}{j}$ and $\jobpre{j}{c}$, such that the gap in social welfare between the optimal ranking $\dsmatrices{\optimalpolicy}$ and the naive one-side relevance-based ranking $\dsmatrices{\naivepolicy}$ is larger than $\Theta(n)$. 
\end{restatable}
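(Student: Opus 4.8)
This is a separation result, so the plan is to \emph{construct} an explicit family of instances, one for each $n$, on which $\naivepolicy$ is provably bad while some feasible policy is provably good; lower-bounding $\swobj{\optimalpolicy}$ by the latter and upper-bounding $\swobj{\naivepolicy}$ then yields the gap. The guiding intuition is the mismatch phenomenon from the introduction: I want the candidate-side relevance that drives $\naivepolicy$ to point \emph{away} from the one employer that would actually reply.

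Concretely, I would take $\canset=\{c_1,\dots,c_n\}$, $\jobset=\{j_1,\dots,j_n\}$, examination $\examfunc{x}=1/x$, and a ``diagonal'' relevance structure. On the employer side, set $\jobpre{j_i}{c_i}=1$ and $\jobpre{j_i}{c_{i'}}=0$ for $i'\neq i$, so employer $j_i$ is willing to reply only to $c_i$. On the candidate side, set $\canpre{c_i}{j_i}=\tfrac12$ but $\canpre{c_i}{j_{i'}}=1$ for $i'\neq i$, so from $c_i$'s own perspective its unique viable partner $j_i$ is the \emph{least} attractive of all employers. Because every off-diagonal pair has $\jobpre{}{}=0$, by \eqref{indi_SW} only the $n$ diagonal pairs $(c_i,j_i)$ can ever contribute, and the welfare of any policy equals exactly $\sum_{i}\clickpro{\pi}{c_i}{j_i}\,\clickpro{\pi}{j_i}{c_i}$.

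The next step is to evaluate both policies. For $\naivepolicy$, sorting by $\canpre{c_i}{\cdot}$ forces $j_i$ into the last position $n$, so $\clickpro{\naivepolicy}{c_i}{j_i}=\tfrac12\,\examfunc{n}=\tfrac{1}{2n}$, and summing gives $\swobj{\naivepolicy}=\tfrac12=\Theta(1)$. For the lower bound on $\swobj{\optimalpolicy}$, I exhibit the feasible policy that places $j_i$ first for each $c_i$, giving $\clickpro{\pi}{c_i}{j_i}=\tfrac12\,\examfunc{1}=\tfrac12$ and hence welfare $\tfrac n2$; by optimality $\swobj{\optimalpolicy}\ge \tfrac n2$. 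Subtracting, $\swobj{\optimalpolicy}-\swobj{\naivepolicy}\ge \tfrac{n-1}{2}=\Theta(n)$.

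The one subtlety to handle carefully --- and the only place the model's interdependence bites --- is the employer-side factor $\clickpro{\pi}{j_i}{c_i}=\jobpre{j_i}{c_i}\,\EE[\examfunc{\rankinapplied{\pi}{j_i}{c_i}}]$ from \eqref{equ: job_click}, whose expectation runs over the random applicant pool $\appliedset{\pi}{j_i}$ and in general couples all candidates. The construction is designed precisely to neutralize this: since $c_i$ is the unique candidate with positive relevance to $j_i$, it is ranked strictly above every other applicant, so $\rankinapplied{\pi}{j_i}{c_i}=1$ whenever $c_i$ applies and the expectation collapses to $\examfunc{1}=1$ for \emph{both} policies. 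This decoupling is what makes the two welfares exactly computable, after which the remaining arithmetic is routine.
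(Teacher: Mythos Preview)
Your proof is correct and genuinely simpler than the paper's. Both arguments construct an explicit family of instances and sandwich the gap, but the instances are quite different. The paper builds a \emph{crowding} scenario: every candidate's top choice is the same employer $j_1$, the examination model is a steep cutoff $v(x)=0.1^{x-1}$ truncated at $m=2$, and employer preferences follow a circulant pattern. The analysis then requires tracking the random applicant pools $\appliedset{\pi}{j}$ and bounding several terms of the form $\PP(\rankinapplied{\pi}{j}{c}=k)$ to show $\swobj{\naivepolicy}\le 1+0.1n+1/n$ while an alternative policy achieves $\ge n-0.99+0.99/n$.

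Your construction instead uses a \emph{mismatch} scenario with $v(x)=1/x$: each employer $j_i$ only values $c_i$, but $c_i$'s own relevance makes $j_i$ its least preferred option. This buys you two simplifications. First, the diagonal zeros on the employer side kill all cross terms, so welfare is exactly $\sum_i \clickpro{\pi}{c_i}{j_i}\clickpro{\pi}{j_i}{c_i}$. Second, and more importantly, because $c_i$ is strictly top-ranked by $j_i$ you get $\rankinapplied{\pi}{j_i}{c_i}\equiv 1$ deterministically, so the employer-side factor collapses to $1$ under \emph{any} policy and the Poisson-Binomial machinery never enters. The price is that your instance does not showcase the crowding phenomenon that motivates the paper's framework; the paper's construction is closer in spirit to the story in the introduction (popular employer overloaded with applications), whereas yours isolates a pure preference-reversal pathology. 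For the purpose of the stated theorem, though, your argument is complete and arguably cleaner.
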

The proof is in Appendix~\ref{app: proof}. The key idea is to construct an instance where there is a popular employer $j^{*}$ that has high probability of relevance for all candidates. The naive relevance-based ranking will rank $j^{*}$ at the top position for all candidates and create crowding, while the optimal policy will also consider less crowded employers that may have negligibly smaller relevance probability. 

 \subsection{A Tractable Optimization Objective}

Given the sub-optimality of naive rankings, we now develop algorithms that directly optimize social welfare. We first show that the original objective is intractable and hard to optimize directly. To address this, a lower bound is derived and we propose efficient algorithms for solving it. 
%To make this problem tractable, the following develops a lower bound that can be optimized more efficiently.

Before diving into the analysis, we will introduce some notations that will be used in this section. 
Analogous to $\rankinlist{c}{j}$, we define $\rankinlist{j}{c}\in\{1,2,3,\cdots, |\canset|\}$ as the rank of candidate $c$ when ranking all candidates in $\canset$ by employer $j$'s relevance probability $\jobpre{j}{c}$. We break ties arbitrarily. 
Inversely, we define $\caninrank{j}{s}:=\{c'\in\canset|\rankinlist{j}{c'}=s\}$ as the candidate listed in rank $s$ among $\canset$ when ranked by $\jobpre{j}{c}$.
Correspondingly, we define the priority set for employer $j$ w.r.t. candidate $c$ as $\priorset{j}{c}:=\{c'\in\canset|\rankinlist{j}{c'}< \rankinlist{j}{c}\}$, which includes the candidates who have higher relevance probability to employer $j$ than candidate $c$, measured by $\jobpre{j}{c}$. Based on this, let $F_{(j,c)}^l$ with $l\leq |\priorset{j}{c}|$ be the set of all subsets of $l$ items that can be selected from $\priorset{j}{c}$: $F_{(j,c)}^l:= \{B\subseteq \priorset{j}{c}| |B| = l\}$.

Given examination function $\examfunc{\cdot}$, and two-sided relevance probabilities $\canpre{c}{j}$ and $\jobpre{j}{c}$, note that our \emph{social welfare} objective can be written as:
\begin{equation}
\begin{aligned}
\label{SW_detail}
\swobj{\pi}
	&= \sum_{c\in\canset}\sum_{j\in\jobset} \clickpro{\pi}{c}{j} \clickpro{\pi}{j}{c}= \sum_{c\in\canset}\sum_{j\in\jobset} \underbrace{\big(\canpre{c}{j}\basisvec{j}^T\dsmatrix{\pi}{c} \examvec\big)}_{\clickpro{\pi}{c}{j}}\underbrace{\jobpre{j}{c}\EE\big[v\big(\rankinapplied{\pi}{j}{c}\big)\big]}_{\clickpro{\pi}{j}{}}
\end{aligned}
\end{equation} 
The probability $\clickpro{\pi}{c}{j}$ of candidate $c$ applying to employer $j$ is tractably linear in the policy $\pi$ (or its corresponding doubly stochastic matrices $\dsmatrices{\pi}$). The difficulty lies in analyzing the term $\rankinapplied{\pi}{j}{c}$, which is a random variable that depends on whether other candidates $c\in \canset$ apply to employer $j$ and the relative relevance among them. The following lemma provides the exact distribution of $\rankinapplied{\pi}{j}{c}$, with the parameters depending on the ranking policy $\pi$.

\begin{restatable}{lemma}{pdf}
\label{distribution}
	 For any fixed $j\in\jobset$, $c\in\canset$, under ranking policy $\pi$, the rank $\rankinapplied{\pi}{j}{c}$ of candidate $c$ in the set of candidates that applied to employer $j$ is a random variable with $\rankinapplied{\pi}{j}{c}\sim 1+ X^{\pi}_{j,c}$, where $X^{\pi}_{j,c}$ is a Poisson Binomial random variable with parameters 
	 $ \big[\clickpro{\pi}{\caninrank{j}{1}}{j}, \clickpro{\pi}{\caninrank{j}{2}}{j}, \cdots, \clickpro{\pi}{\caninrank{j}{\rankinlist{j}{c}-1}}{j}\big]$. 
	 Each element corresponds to the probability of candidate $c'\in \priorset{j}{c}$ applying to employer $j$. 
	 For $k\in \{1,2,\cdots \rankinlist{j}{c}\}$, we have:
	 \begin{equation}
\label{eq:dist}
\PP(\rankinapplied{\pi}{j}{c}= k) = \sum_{U \in F^{k-1}_{(j,c)}} \prod_{s \in U} \clickpro{\pi}{s}{j}  \prod_{r \in A_j(c)\backslash U}(1-\clickpro{\pi}{r}{j})
\end{equation}
\end{restatable}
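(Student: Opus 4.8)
The plan is to reduce the rank of $c$ among the applicants to a counting problem and then recognize the resulting count as a Poisson Binomial random variable. The starting observation is that once the applicants in $\appliedset{\pi}{j}$ are sorted by the employer's relevance $\jobpre{j}{\cdot}$, the rank of $c$ is completely determined by how many of the applicants are ranked strictly above it. The candidates ranked above $c$ in the \emph{full} ordering by $\jobpre{j}{\cdot}$ are, by definition, exactly the priority set $\priorset{j}{c} = \{\caninrank{j}{1}, \ldots, \caninrank{j}{\rankinlist{j}{c}-1}\}$; among these, the ones that actually contribute to $c$'s rank are those who applied. Hence, conditioning on $c$ having applied, I would establish the deterministic identity $\rankinapplied{\pi}{j}{c} = 1 + \sum_{c' \in \priorset{j}{c}} \mathbf{1}[c' \text{ applied to } j]$. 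I would make this rigorous while being explicit that tie-breaking is fixed, so that ``strictly above $c$'' is well-defined and $|\priorset{j}{c}| = \rankinlist{j}{c}-1$; this already pins down the support as $\{1,\ldots,\rankinlist{j}{c}\}$, matching the stated range of $k$.

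The central structural fact I would invoke next is that candidates act \emph{independently}: each candidate $c'$ receives their own sampled ranking $\ranking{c'}$ and decides whether to apply to $j$ on their own, so the indicator events $\{c' \text{ applies to } j\}$ are mutually independent across $c'$, each with success probability $\clickpro{\pi}{c'}{j}$ as given by \eqref{equ:cus_click}. In particular, $c$'s own application decision is independent of those of the candidates in $\priorset{j}{c}$, so conditioning on ``$c$ applied to $j$'' does not alter the joint law of $\{\mathbf{1}[c' \text{ applied to } j] : c' \in \priorset{j}{c}\}$. Consequently $X^{\pi}_{j,c} := \rankinapplied{\pi}{j}{c} - 1$ is a sum of $\rankinlist{j}{c}-1$ independent, non-identically-distributed Bernoulli variables with parameters $\big[\clickpro{\pi}{\caninrank{j}{1}}{j}, \ldots, \clickpro{\pi}{\caninrank{j}{\rankinlist{j}{c}-1}}{j}\big]$, which is precisely the definition of a Poisson Binomial random variable.

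Finally, I would write out the standard Poisson Binomial PMF to obtain \eqref{eq:dist}. The event $\{X^{\pi}_{j,c} = k-1\}$ decomposes as the disjoint union, over all size-$(k-1)$ subsets $U \in F^{k-1}_{(j,c)}$ of $\priorset{j}{c}$, of the events ``exactly the candidates in $U$ applied''; by the independence just argued, each such event has probability $\prod_{s \in U} \clickpro{\pi}{s}{j}\prod_{r \in \priorset{j}{c}\backslash U}(1-\clickpro{\pi}{r}{j})$, and summing over $U$ yields the claimed formula for $\PP(\rankinapplied{\pi}{j}{c}=k)$.

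The main obstacle is a modeling subtlety rather than a computation: I must argue carefully that conditioning on ``$c$ applied to $j$'' leaves the priority applicants' joint distribution unchanged, which relies entirely on the cross-candidate independence built into the model, and that the relevant parameters are exactly the application probabilities $\clickpro{\pi}{c'}{j}$ of the candidates in $\priorset{j}{c}$, with lower-priority candidates making no contribution to the rank. Once these two points are settled, the remaining derivation of the Poisson Binomial PMF is routine.
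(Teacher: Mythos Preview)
Your proposal is correct and follows essentially the same approach as the paper: express $\rankinapplied{\pi}{j}{c}$ as $1$ plus the number of higher-priority candidates in $\priorset{j}{c}$ who applied, recognize this count as a sum of independent Bernoulli indicators with parameters $\clickpro{\pi}{c'}{j}$, and read off the Poisson Binomial PMF. If anything, your treatment is more careful than the paper's, which does not explicitly address the conditioning on ``$c$ applied to $j$'' or the tie-breaking convention; those points you raise are valid but do not constitute a different route.
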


The proof is provided in Appendix~\ref{app: proof}. Given Lemma~\ref{distribution}, it is easy to see that the PMF of the distribution of $\rankinapplied{\pi}{j}{c}$ involves $n!/((n-l)!l!)$ terms of summation (with $n=|\priorset{j}{c}|$ and $l\in\{1,2,\cdots, \rankinlist{j}{c}-1\}$), which poses an extensive computational burden. While a recursive formula exists in the literature \cite{shah1973distribution}, the complicated form w.r.t.\ the $\clickpro{\pi}{c'}{j}$'s poses significant challenges when we treat these probabilities as function of $\pi$ (or its $\dsmatrices{\pi}$). To avoid this, we instead optimize the following lower bound on the original objective (Equation~\ref{SW_detail}), which only requires a mild condition on the examination function $\examfunc{\cdot}$ and is much easier to compute and optimize. 

\begin{restatable}{theorem}{uti}
If the examination function $\examfunc{\cdot}$ is convex, the following expression lower bounds the social-welfare objective of the stochastic ranking policy $\pi$:
\begin{equation}
\label{eq:swlower}
	\swlowobj{\pi}:= \sum_{c\in\canset}\sum_{j\in\jobset}\! \canpre{c}{j}\jobpre{j}{c}v\big(1+\!\!\!\!\!\!\!\sum_{c' \in \priorset{j}{c}}\!\!\!\!\!\canpre{c'}{j}\basisvec{j}^T \dsmatrix{\pi}{c'} \examvec\big)\basisvec{j}^T \dsmatrix{\pi}{c} \examvec
\end{equation}
\end{restatable}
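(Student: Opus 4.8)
The plan is to prove $\swobj{\pi} \ge \swlowobj{\pi}$ by a term-by-term comparison, because the two expressions differ only in a single place: in the exact objective (Equation~\ref{SW_detail}) the reactive-side factor is $\EE\big[v\big(\rankinapplied{\pi}{j}{c}\big)\big]$, whereas in the proposed bound it is $v$ evaluated at a deterministic argument. Everything else --- the prefactor $\canpre{c}{j}\jobpre{j}{c}$ and the proactive-side factor $\basisvec{j}^T\dsmatrix{\pi}{c}\examvec$ --- is identical in both. So the entire claim reduces to controlling, for each fixed pair $(c,j)$, the quantity $\EE\big[v\big(\rankinapplied{\pi}{j}{c}\big)\big]$ from below by a deterministic value, and then checking that the shared prefactors are nonnegative so that the inequality is preserved after multiplication and summation.

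First I would identify the deterministic argument appearing inside $v(\cdot)$ in $\swlowobj{\pi}$ as precisely the \emph{expectation} of the random rank. By Lemma~\ref{distribution}, $\rankinapplied{\pi}{j}{c} = 1 + X^{\pi}_{j,c}$ where $X^{\pi}_{j,c}$ is Poisson Binomial with success parameters equal to the application probabilities $\clickpro{\pi}{c'}{j}$ of the candidates $c' \in \priorset{j}{c}$. Since a Poisson Binomial variable is a sum of independent Bernoulli indicators, linearity of expectation gives $\EE\big[X^{\pi}_{j,c}\big] = \sum_{c' \in \priorset{j}{c}} \clickpro{\pi}{c'}{j}$, hence
\begin{equation}
\EE\big[\rankinapplied{\pi}{j}{c}\big] = 1 + \sum_{c' \in \priorset{j}{c}} \clickpro{\pi}{c'}{j} = 1 + \sum_{c' \in \priorset{j}{c}} \canpre{c'}{j}\,\basisvec{j}^T \dsmatrix{\pi}{c'} \examvec,
\end{equation}
where the last equality substitutes Equation~\ref{equ:cus_click}. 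This is exactly the argument of $v$ in $\swlowobj{\pi}$, which reframes the claim as: replacing $\EE[v(\cdot)]$ by $v(\EE[\cdot])$ can only decrease each term.

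The inequality $\EE\big[v\big(\rankinapplied{\pi}{j}{c}\big)\big] \ge v\big(\EE\big[\rankinapplied{\pi}{j}{c}\big]\big)$ is then immediate from Jensen's inequality, using the hypothesis that $v$ is convex. I would then observe that the common prefactor $\canpre{c}{j}\,\jobpre{j}{c}\,\basisvec{j}^T \dsmatrix{\pi}{c}\examvec$ multiplying each term is nonnegative: $\canpre{c}{j},\jobpre{j}{c} \in [0,1]$ are relevance probabilities, $\dsmatrix{\pi}{c}$ is doubly stochastic (hence entrywise nonnegative), and $\examvec \in \RR_{+}^{|\jobset|}$. Multiplying the term-wise Jensen inequality by this nonnegative factor preserves its direction, and summing over all $c \in \canset$ and $j \in \jobset$ preserves it as well, yielding $\swobj{\pi} \ge \swlowobj{\pi}$.

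There is no genuinely hard step here; the result is essentially a single clean application of Jensen's inequality applied summand-wise. The only point requiring care --- and the step I would present most explicitly --- is the identification in the second paragraph: verifying that the deterministic quantity sitting inside $v(\cdot)$ in the lower bound coincides exactly with the mean of the Poisson Binomial rank variable from Lemma~\ref{distribution}, via linearity of expectation and the substitution of Equation~\ref{equ:cus_click}. Once that correspondence is established, convexity of $v$ and nonnegativity of the prefactors close the argument immediately.
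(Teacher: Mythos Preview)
Your proposal is correct and follows essentially the same approach as the paper: both identify the deterministic argument in $\swlowobj{\pi}$ as $\EE[\rankinapplied{\pi}{j}{c}]$ via Lemma~\ref{distribution} and linearity of expectation, then apply Jensen's inequality termwise using convexity of $v$. If anything, your write-up is slightly more explicit than the paper's in justifying that the shared prefactors $\canpre{c}{j}\jobpre{j}{c}\,\basisvec{j}^T\dsmatrix{\pi}{c}\examvec$ are nonnegative, which is what legitimizes carrying the termwise inequality through the double sum.
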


It is worth noting that convexity of $v$ is not restrictive, since most of the commonly used examination models like $\examfunc{x}=1/x$ and $\examfunc{x}=1/\log_2(1+x)$ satisfy this condition. 

Now we are ready to formulate the main optimization problem, which optimizes the tractable \emph{social-welfare-aware} objective $\swlowobj{\pi}$ over the set of doubly stochastic matrices, one for each candidate:
\begin{equation}
\label{eq:lower_opt}
\begin{aligned}
& \underset{\dsmatrices{\pi}:=\{\dsmatrix{\pi}{c}\}_{c\in\Ccal}}{\text{maximize}}
& &  \!\!\!\!\!\!\sum_{c\in\canset}\sum_{j\in\jobset} \canpre{c}{j}\jobpre{j}{c}v\big(1+\!\!\!\!\!\!\sum_{c' \in \priorset{j}{c}}\!\!\!\!\!\!\canpre{c'}{j}\basisvec{j}^T \dsmatrix{\pi}{c'} \examvec\big)\basisvec{j}^T \dsmatrix{\pi}{c} \examvec\\
& \hspace{0.7cm}\text{s. t.}
& & \!\!\!\!\!\!\textbf{1}^T \dsmatrix{\pi}{c} = \textbf{1}^T, \dsmatrix{\pi}{c}\textbf{1}=\textbf{1}, \forall c\in\canset   \; 
\end{aligned}
\end{equation}

Multiple approaches can be used to optimize Equation~\eqref{eq:lower_opt}. One option is projected gradient descent, where the projection of any positive matrix into the set of doubly stochastic matrices can be computed by the Sinkhorn-Knopp Algorithm \cite{sinkhorn1967concerning, wang2010learning}, which is known to minimize the KL divergence of any nonnegative matrix to the Birkhoff Polytope \cite{wang2010learning}. 
An alternative is conditional gradient descent. Since the set of doubly stochastic matrices is convex, we can utilize any convex optimization solver to find the descent direction. In our experiments, we use the Frank-Wolfe approach. Both algorithms are provided in Appendix~\ref{app:alg}. 
For specific examination functions $\examfunc{\cdot}$, more specialized algorithms exists. For example, for $\examfunc{x}=1/x$ the optimization problem in Equation~\eqref{eq:lower_opt} becomes a fractional program that can be optimized with an iterative convex-concave procedure.

\section{Experiments}
\label{sec:exp}
% \ys{Experiments to do:  (1). Fairness plots (candidate utility histogram) for the two baselines and our algorithms, under two setting, one synthetic, one KDD. (2). Plots about employer utility gain histograms. (3). Experiment results on the dating website. (4). Re-plot other plots and save all plots in .pdf (5). Make the font of the legend and title larger.}
% \magd{Finished (1) and (2). This is in the phase where I am adjusting font and plot sizes. I have (4) as well [843.7513, 957.9463, 1199.0404] where the ordering is Greedy, Reciprocal, and lastly Ours.}
In this section, we empirically evaluate several key properties of our approach. We first present experiments on synthetic data which allows us to vary the properties of the two-sided markets to explore the robustness of the method. In addition, we also assess our method on a real-world data for external validity, including a benchmark dataset from a dating platform, and a new dataset from a virtual-conference networking-recommendation system we built.

\subsection{Analysis on Synthetic Data}

To examine how our method performs in comparison to baselines over a range of matching markets with different characteristics, we create synthetic datasets as follows. We construct matching markets with $n$ employers and $1.5 n$ job candidates to avoid unrealistic symmetry. In the simplest case, we generate relevance probabilities $\asycanpre{c}{j}$ and $\asyjobpre{j}{c}$ through independent and uniform draws from $[0,1]$. We refer to this as the {\em random} setting, but also consider more structural preferences. One type of structure is {\em crowding} on some employers and candidates. To create a setting with crowding, we rank employers and candidates in arbitrary order and name them $j_1, j_2, \cdots, j_{|\jobset|}$ and $c_1, c_2, \cdots, c_{|\canset|}$. For the candidate-side relevance, we linearly interpolate the relevance probability in $[0,1]$ and define $\concanpre{c}{j_i}=1-\frac{i-1}{|\jobset|-1}$ identically for all $c$ so that the relevances of a fixed employer are the same to all candidates.
%all candidates have the same relevance probabilities for the employers. 
For the employer-side relevance, we similarly take $\conjobpre{j}{c_i}=1-\frac{i-1}{|\canset|-1}$ for all $j$. To adjust the level of crowding, we take the convex combination of the random setting and the fully crowded setting with parameter $\lambda$: $\canpre{c}{j}:=(1-\lambda) \asycanpre{c}{j}+\lambda \concanpre{c}{j}$ and $\jobpre{c}{j}:=(1-\lambda) \asyjobpre{c}{j}+\lambda \conjobpre{c}{j}$. If not mentioned otherwise, we use $\lambda=0.5$ in the experiments, as well as a market size of $n=100$ and the examination function $\examfunc{x}=1/x$ for all candidates and employers.
We measure the quality for various ranking policies by the original social welfare objective from Equation~\eqref{SW_detail}, which for each run we estimate using 10000 Monte Carlo samples following the generative process described in Section~\ref{sec:framework}, and average over 10 runs.

We compare our \swname, optimized using Algorithm~\ref{alg:1}, against the following baselines. \naivename\ ranks employers by the one-sided relevance $\canpre{c}{j}$ for each candidate $c$. \reciprocalname\ ranks by the reciprocal relevance probability $\canpre{c}{j}\jobpre{j}{c}$ for each candidate $c$, representing a heuristic objective that accounts for the reciprocal nature of a match while ignoring dependencies between candidates. %We do not aware any other ranking algorithms work in the same setting.
\begin{figure*}[!ht]
  \centering
\includegraphics[width=\linewidth]{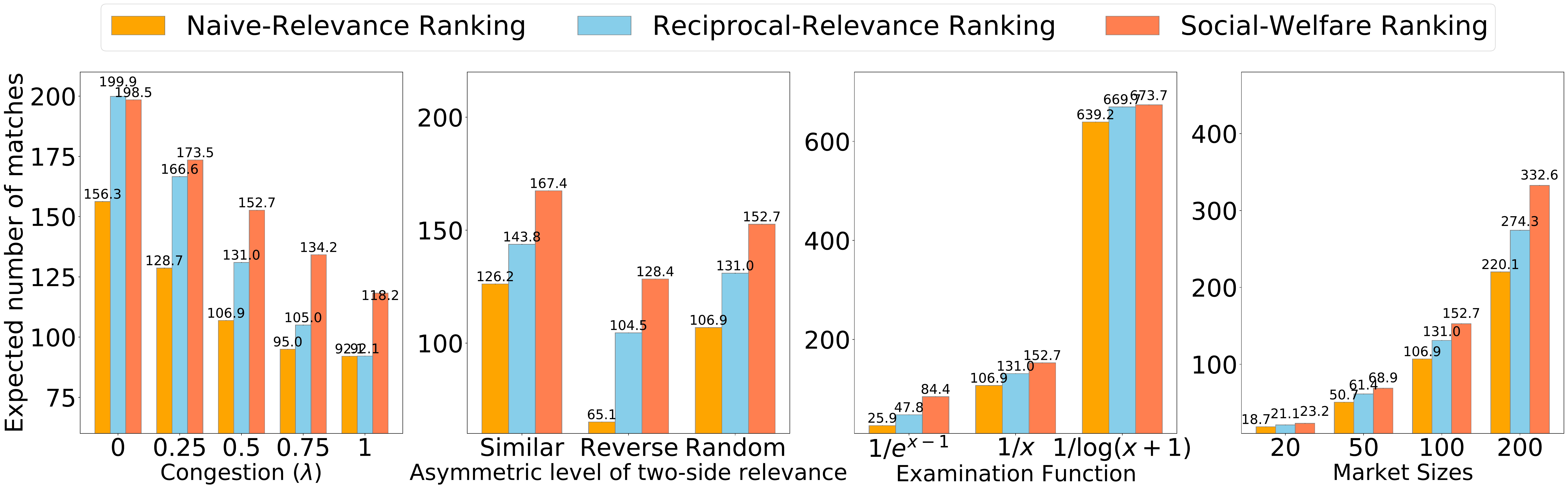}
  \caption{Results on synthetic dataset. The parameters are ($\lambda=0.5$, random, $\examfunc{x}=1/x$, $n=100$), unless stated otherwise. The standard errors are on the order of $1e-2$ and invisible in the graph.}
  \label{fig:combine}
\end{figure*}
% \vspace{-0.3cm}
\paragraph{How do the methods perform for different levels of crowding?} 
% \newline
% \newline
% \textbf{How do the methods perform for different levels of crowding?} 
With crowding we refer to a situation where some employers become overloaded with applications while others get unnoticed. The leftmost graph in  Figure~\ref{fig:combine} shows the expected number of matches for different levels $\lambda$ of crowding. Especially for high levels of crowding, the \swname\ performs substantially better than the baselines. While the \naivename\ performs poor for all levels of crowding, the \reciprocalname\ performs equivalent to the \swname\ when the preferences are fully random and there is no crowding. However, the \reciprocalname\ fails to account for collisions in high-crowding settings, where it does no better than the \naivename.

\paragraph{How do the methods perform when there is structure in the relevance probabilities?} 
Most real-world problems will contain some structure in the relevance probabilities $\canpre{c}{j}$ and $\jobpre{j}{c}$. We now explore the two complementary cases where the candidate and employer relevance probabilities are either similar to each other or the reverse of each other. To construct \emph{similar} two-sided relevances, we take $\asyjobpre{j}{c}=\min\{\max\{\asycanpre{c}{j}+e,0\},1\}$ with $e\sim\mathcal{N}(0,0.2)$. For the \emph{reversed} two-sided relevances, we take $\asyjobpre{j}{c}=\min\{\max\{1-\asycanpre{c}{j}+e,0\},1\}$ with $e\sim\mathcal{N}(0,0.2)$. For the \emph{random} two-sided relevances, we use the construction already introduced above. For all settings, we use a crowding level of $\lambda=0.5$. The results are shown in the second plot of Figure~\ref{fig:combine}. As the level of asymmetry moves from reverse to similar, the expected number of matches for all methods increases as expected. Moreover, for all relevance structures, \swname\ consistently achieves substantially higher social welfare than the baseline methods.
\paragraph{How does the examination function influence the relative performance?} The examination function $\examfunc{\cdot}$ models how many results people are able or willing to browse. A steep drop-off in examination probability, like $v(x)=1/e^{x-1}$, means that they are likely to only evaluate the top few results. A flat examination function, such as $\examfunc{x}=\frac{1}{\log(1+x)}$, means that they are likely to go further down. The third plot in Figure~\ref{fig:combine} shows the expected number of matches as we change the examination function. Unsurprisingly, a flatter examination function leads to more matches and little difference between the methods, since results are likely to be discovered no matter where they are placed in the ranking. For the steepest examination function, the relative advantage of the \swname\ over the baselines is largest, and it almost doubles the number of matches that is achieved by the \reciprocalname.
%Here we examine how our algorithms work under various examination function: $\examfunc{x}=1/x$, $\examfunc{x}=1/\log(1+x)$, and $\examfunc{x} = 1/e^{x-1}$. We fixed the two-sided relevance setting with congestion level $\lambda=0.5$ and market size $n=100$. The result is shown in the third plot in Figure~\ref{fig:combine}. As the examination function getting sharper, our method improves much more than the naive method. This matches the intuition since if people have flatter examination model, basically they have similar probabilities of examination over different ranks, which makes the problem closer to the two-sided market with unlimited attention, for which the ranking does not play any important role in this setting. 
\paragraph{How does the size of the markets affect the methods?} 
In this experiment, we vary the size of market to understand how this affects the effectiveness of the methods. Results are shown in the rightmost plot of Figure~\ref{fig:combine}. As market size increases, all the ranking methods achieve higher utility, which is expected since there are more opportunities for matches. More interestingly, regardless of the market size, the relative performance among the methods is largely unaffected.

\subsection{Validation on Real-World Data} 
\label{sec:realworld}

We also validated our method on data from two real-world two-sided matching markets. First, we collected a new dataset by launching a networking recommendation system for a major computer science conference. The goal was to help participants find other participants that they may want to interact with. 
This recommendation system fits naturally in the two-sided matching markets framework as each user acts proactively by sending messages, scheduling meetings, etc to other users and the recommendation is successful if the other user replies positively. To account for the fact that each user can serve as both the proactive side (initiate interaction) and the reactive side (reply to the messages), we put each of the 925 users on both sides of the market, with the two-sided relevances as detailed in the Appendix. 
% Curiously, the relevance probabilities have substantial directionality, with faculty and postdocs being "crowded". However, the expected number of individuals that each user reaches out is rather uniform between groups. 
%More details will accompany the dataset, which we will release along with this paper to facilitate research in this field.
% \begin{figure}[h!]
%   \centering
%   \includegraphics[width=\linewidth]{figures/utility_gain_system_kdd.png}
%   \caption{Histogram for individual utility gains when all users switch from \naivename\ to \swname\ on the real-world data.}
%   \label{fig:kddutilitygain}
% \end{figure}
We tested the performance of various ranking algorithms for this data using $\examfunc{x}=1/x$ as the examination function (results for other examination functions follow a similar trend). 
% We computed the \naivename, the \reciprocalname, and the \swname\ for this data using $\examfunc{x}=1/x$ as the examination function (results for other examination functions follow a similar trend). 
To reduce computational complexity, we use a two-stage ranking procedure. We first identify the top 100 results based on their reciprocal relevance, and only re-rank those to maximize social welfare. The ranking after the top 100 is by reciprocal relevance. 
% We find that the expected number of matches for \naivename\ is $603.9$, for the \reciprocalname\ it is $764.8$, and for the \swname\ it is $823.1$. 
Results over 10 runs are shown in Table~\ref{table:realworld}. The substantial improvement over baseline algorithms verifies that the proposed approach can provide significant benefit in realistic applications. Beyond this overall improvement in social welfare, 
we find increased individual utility for more than 88\% of the participants, which we further discuss in the following section on fairness. 
% we also examine the individual utility that each participant gets by switching the ranking policy from the \naivename\ to the \swname. 
% More investigation shows that more than 88\% of the participants gain in utility from this switch (detailed results and discussion in Appendix~\ref{sec:tradeoff}). 
%The histogram in Figure~\ref{fig:kddutilitygain} shows that more than 88\% of the participants gain in utility from this switch (additional results in Appendix~\ref{sec:addexp}). 
% While this percentage is large, it is less than 100\% like on the synthetic data. This motivates further investigations into how this loss was distributed among the participant groups, and whether additional fairness constraints can ensure that individual gains and losses are equitable.

As a second real-world benchmark, we tested our method on data from the online dating service Libimseti \cite{10.1145/2808797.2809282}. 
%This is very similar to the networking dataset where the two sides are not disjoint and can be proactive or reactive within the market. 
The dataset contains ratings given by a user to other users in the system. %however there are missing ratings. 
We select 500 males and 500 females that have given the most ratings to other users and impute any missing rating using the alternating least squares (ALS) procedure \cite{Paterek2007ImprovingRS}, making the simplifying assumption that females will only rate males and vice versa. Again, we evaluate the performance of various algorithms using $\examfunc{x}=1/x$, and use the same two-stage ranking procedure for \swname. Results are shown in the third column of Table~\ref{table:realworld}. Again, \swname\ achieves the highest social welfare and exceeds the baseline algorithms by a large margin.
\begin{table}[h!]
%\vspace{-0.3cm}
\caption{Social Welfare ($\pm$ two stderr) for various algorithms on real-world datasets.}
\centering
\begin{tabular}{ ccc } 
\hline
Dataset & Networking Recommendation  & Online Dating (Libimseti)    \\ 
$\naivename$ & 604.0 $\pm$ 0.11 & 844.0 $\pm$ 0.10 \\ 
$\reciprocalname$ & 763.8 $\pm$ 0.06 & 957.2 $\pm$ 0.12 \\ 
$\swname$ & 824.1 $\pm$ 0.18 & 1199.2 $\pm$ 0.14 \\ 
\hline
\end{tabular}
\label{table:realworld}
% \caption{Social Welfare for various algorithms on real-world dataset}
\end{table}
%\vspace{-0.2cm}
% \magd{For the KDD data, I have the exp and inv results. For inv \naivename was 603.966, \reciprocalname was 763.788 and \swname was 824.081. The corresponding SE's were [0.053220522 0.031545142 0.091667601]. For exp, \naivename was 287.404, \reciprocalname was 339.761 and \swname was 469.944. The corresponding SE's were [0.02126885  0.03075856  0.049221377]. Should we place this into a table since a plot might be too large?}
% [603.966 763.788 824.081], [0.053220522 0.031545142 0.091667601]
% [287.404 339.761 469.944], [0.02126885  0.03075856  0.049221377]
\subsection{Impact on Individual Fairness, Adoption and Retention} 
\label{sec:fair}
The issue of fairness is more complex in matching markets than in conventional applications of ranking systems, given the complex dependencies between the individuals. While our objective of maximizing social welfare recognizes the societal importance and impact of many matching markets, we also examined its impact on individual fairness. Figure~\ref{fig:fairness} plots the histograms of individual utilities for different ranking algorithms on the synthetic dataset (results for the two real-world datasets are in Appendix~\ref{sec:tradeoff}). On all datasets, \swname\ improves individual utility disparity in comparison to \naivename\ and \reciprocalname. Specifically, the fraction of users in the lowest utility bin is reduced by \swname, leading to a more equitable distribution of utility.
%quite disparate (many users have low utility while a few have high utility), \swname\ decreases the disparity gap between these users. However, the wider spread in the individual utilities suggest that the candidates/employers might not gain uniformly from the system under these algorithms.
%In particular, the trade-offs that are made during optimization may disproportionately favor one group over another in order to maximize the number of matches. However, any other ranking policy may have similar disparate impact, and so can the absence of any publicly accessible recommender system.
However, we argue that adding explicit fairness constraints is still useful, and that our optimization-based framework is well-suited for including statistical parity or merit-based exposure constraints \cite{singh2018fairness} in future work.
Beyond fairness, we also briefly discuss the need for future work on strategic behavior, such as adoption and retention in Appendix~\ref{sec:tradeoff}.
\begin{figure*}
    \centering
    \includegraphics[width=\textwidth]{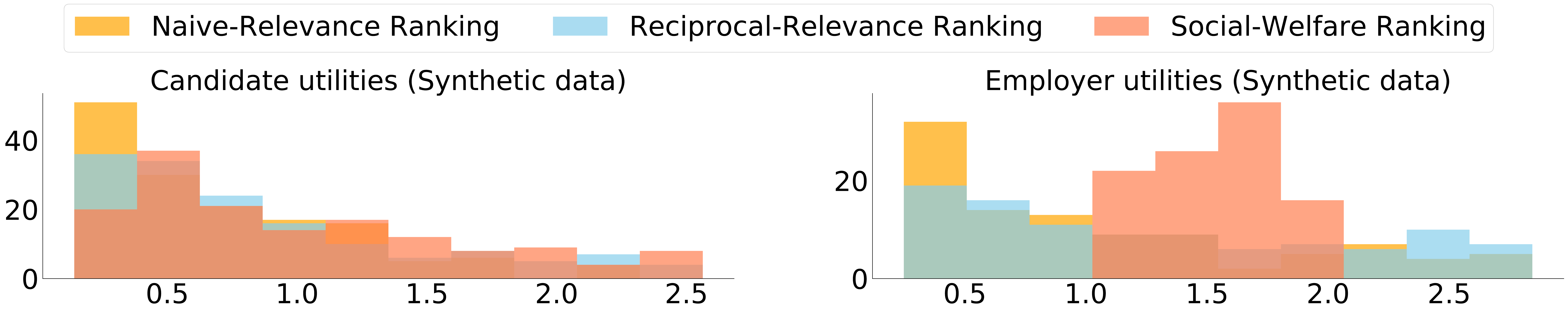}
    \caption{Histogram of candidate and employer utilities on synthetic dataset.}
    \label{fig:fairness}
\end{figure*}
\section{Conclusions}
We have formulated the problem of ranking in the two-sided matching markets with the objective of maximizing social welfare in terms of the total number of matches. To make this problem tractable, we identified a lower bound and show how it can be optimized effectively. This gives the first ranking algorithm that is able to jointly optimize personalized rankings that take into account mutual preferences and limited capacity.
In experiments on both synthetic and real-world datasets, we find that our proposed ranking algorithm can consistently achieve the highest social welfare in comparison to the baselines.
We also outlined directions for future work on strategic behavior and fairness guarantees for ranking systems in matching markets. In addition, this work opens other interesting questions for future work, such as how estimation errors of relevance probabilities affect the rankings and their overall social welfare.

\section*{Acknowledgements}
This research was supported in part by NSF Awards IIS-1901168 and IIS-2008139, as well as a Bloomberg Fellowship. All content represents the opinion of the authors, which is not necessarily shared or endorsed by their respective employers and/or sponsors.
% \bibliography{refs}
% \bibliographystyle{plain}
%%%%%%%%%%%%%%%%%%%%%%%%%%%%%%%%%%%%%%%%%%%%%%%%%%%%%%%%%%%%
%%%%%%%%%%%%%%%%%%%%%%%%%%%%%%%%%%%%%%%%%%%%%%%%%%%%%%%%%%%%

%%%%%%%%%%%%%%%%%%%%%%%%%%%%%%%%%%%%%%%%%%%%%%%%%%%%%%%%%%%%

\newpage
\appendix

\section{Appendix}
\subsection{Discussion on Adoption and Retention}
\label{sec:tradeoff}
In this section, we want to at least briefly discuss the need for future work on strategic behavior and fairness guarantees for rankings in matching markets. We posit that these issues are even more important than for ranking systems in conventional markets, since for matching markets the rankings and the actions of all participants shape the utilities of each individual in complex and interdependent ways. This is true for any ranking system applied to matching markets, whether it optimizes social welfare, any other objective, or does not do any explicit optimization at all.

To formalize strategic behavior, we view each candidate $c\in\canset$ in the market as an independent player that can choose among ranking policies $\pi$ (e.g., different recommender systems) or equivalently their doubly stochastic matrices $\dsmatrix{\pi}{c}$. For each player $c$, the payoff $R_c(\dsmatrix{\pi}{c}, \bigcup_{\ccomplement \in\canset\backslash c} \dsmatrix{\pi'}{\ccomplement})$  is the tractable lower bound of total expected number of matches that player $c$ could get, which is a function of player $c$'s action $\dsmatrix{\pi}{c}$, and the other players' actions $\bigcup_{\ccomplement \in\canset\backslash c} \dsmatrix{\pi'}{\ccomplement}$.
\begin{equation*}
	R_c(M^{\pi}_c, \!\!\!\bigcup_{\ccomplement\in\canset\backslash c}\!\!\! M^{\pi'}_{\ccomplement})= \sum_{j\in\jobset} \canpre{c}{j}\jobpre{j}{c}v\big(1+\!\!\!\!\!\!\!\sum_{c' \in \priorset{j}{c}}\!\!\!\!\!\!\canpre{c'}{j}\basisvec{j}^T \dsmatrix{\pi'}{c'} \examvec\big)\basisvec{j}^T \dsmatrix{\pi}{c} \examvec
\end{equation*}

%\noindent\textbf{Adoption.} 
\paragraph{Adoption.}
The first question we consider is whether candidates will want to participate in the social-welfare optimal system, or whether they will prefer the naive ranking $\dsmatrix{\naivepolicy}{c}$ which ranks by their own relevance probabilities $\canpre{c}{j}$? As strategic agents, candidates will switch to their social-welfare optimal ranking $\dsmatrix{\pi}{c}$, if it increases their utility compared to $\dsmatrix{\naivepolicy}{c}$. If we assume that during the initial fielding of the system all candidates $c'\in \canset$ use $\dsmatrix{\naivepolicy}{c'}$, we may want to add the following constraints to our optimization objective, enforcing that all agents have an $\epsilon$-incentive to switch.
\begin{equation}
\label{eq:adopt}
	R_c(\dsmatrix{\pi}{c}, \bigcup_{\ccomplement \in\canset\backslash c} \dsmatrix{\naivepolicy}{\ccomplement}) \geq R_c(\dsmatrix{\naivepolicy}{c}, \bigcup_{\ccomplement \in\canset\backslash c} \dsmatrix{\naivepolicy}{\ccomplement})+\epsilon 
\end{equation}
It is worth noting that these constraints are linear in $\dsmatrix{\pi}{c}$. Hence, we can simply incorporate them into the Frank-Wolfe Algorithm (see Algorithm~\ref{alg:1}). 

%\noindent\textbf{Retention.} 
\paragraph{Retention.}
The second question considers the behavior of the candidates once the system has been widely adopted. In particular, do candidates have an incentive to abandon the system and return to their naive ranking $\dsmatrix{\naivepolicy}{c}$?

To avoid this, the system ranking $\dsmatrix{\pi}{c}$ should provide a larger utility than $\dsmatrix{\naivepolicy}{c}$ given that all other candidates stay with the system, which could be enforced by adding additional constraints of the form

\begin{equation}
	R_c(\dsmatrix{\pi}{c}, \bigcup_{\ccomplement \in\canset\backslash c} \dsmatrix{\pi}{\ccomplement})\geq R_c(\dsmatrix{\naivepolicy}{c}, \bigcup_{\ccomplement \in\canset\backslash c} \dsmatrix{\pi}{\ccomplement})+ \epsilon.
\end{equation}

Unfortunately, this constraint set is not convex and it is difficult to use Algorithm~\ref{alg:1} (similarly Algorithm~\ref{alg:3}) directly. However gradient descent ascent (GDA) could be used to find the solution of the Lagrangian dual form of the problem.

% \noindent\textbf{Fairness.} Finally, the issue of fairness is also more complex in matching markets than in conventional applications of ranking systems, given the complex dependencies between the individuals. While our objective of maximizing social welfare recognizes the societal importance and impact of many matching markets, it does not necessarily guarantee fairness. In particular, the trade-offs that are made during optimization may disproportionately favor one group over another in order to maximize the number of matches. However, any other ranking policy may have similar disparate impact, and so can the absence of any publicly accessible recommender system. We argue that the optimization-based framework we introduce provides the opportunity for adding fairness constraints, like statistical parity or merit-based exposure constraints \cite{singh2018fairness}.

%\noindent\textbf{Experimental Examination.} 
\paragraph{Experimental Examination.} 
In this experiment, we examine how does the social-welfare optimized ranking affect individual utilities.
As discussed in aforementioned paragraphs, the individual utilities of the candidates can affect adoption, retention and fairness of the ranking system. The first row of Figure~\ref{fig:standard_stable_adopt} considers the case where all users switch from using the \naivename\ to using the \swname\ on the synthetic dataset. It shows a histogram of candidates according to how much they gain from this switch in terms of expected number of matches. Surprisingly, none of the candidates is worse off in this switch for the synthetic dataset. Nevertheless, there are still potential fairness issues and we see on the real-world datasets that this uniformity in gain is not always guaranteed. The second row considers the gain from adoption, which is also non-negative for all candidates on the synthetic dataset and a large fractional of candidates on the networking recommendation dataset. However, the proportion of positive gains on the online dating dataset is pretty low. This calls for interesting future work on incorporating adoption in recommender system design.
Finally, the third row shows that all candidates in the synthetic dataset (or most candidates for real-world datasets) are better off staying in the system than switching back to the \naivename. We also explored other variants of our synthetic data, and generally found that the \swname\ is beneficial for most users. 
This is pretty encouraging for future work, since it suggests that stronger guarantees on individual utilities may be achievable with little drop in social welfare. 

Beyond strategic behaviors such as adoption and retention, we also empirically examine the fairness in individual utility gains.  Figure~\ref{fig:fairness_real} shows the histograms of the candidate/employer utilities for two real-world datasets. Similar as what we observed for synthetic dataset, \swname\ leads to a more equitable distribution of utility, compared with the two baseline algorithms. 
%Additional results about the individual utility gains for the real-world dataset is similar and shown in the corresponding right column.
A full treatment of strategic behavior and fairness for ranking systems in matching markets is beyond the scope of this paper, but is an important area for future work.

\begin{figure}[t!]
    \centering
    \includegraphics[width=0.32\linewidth]{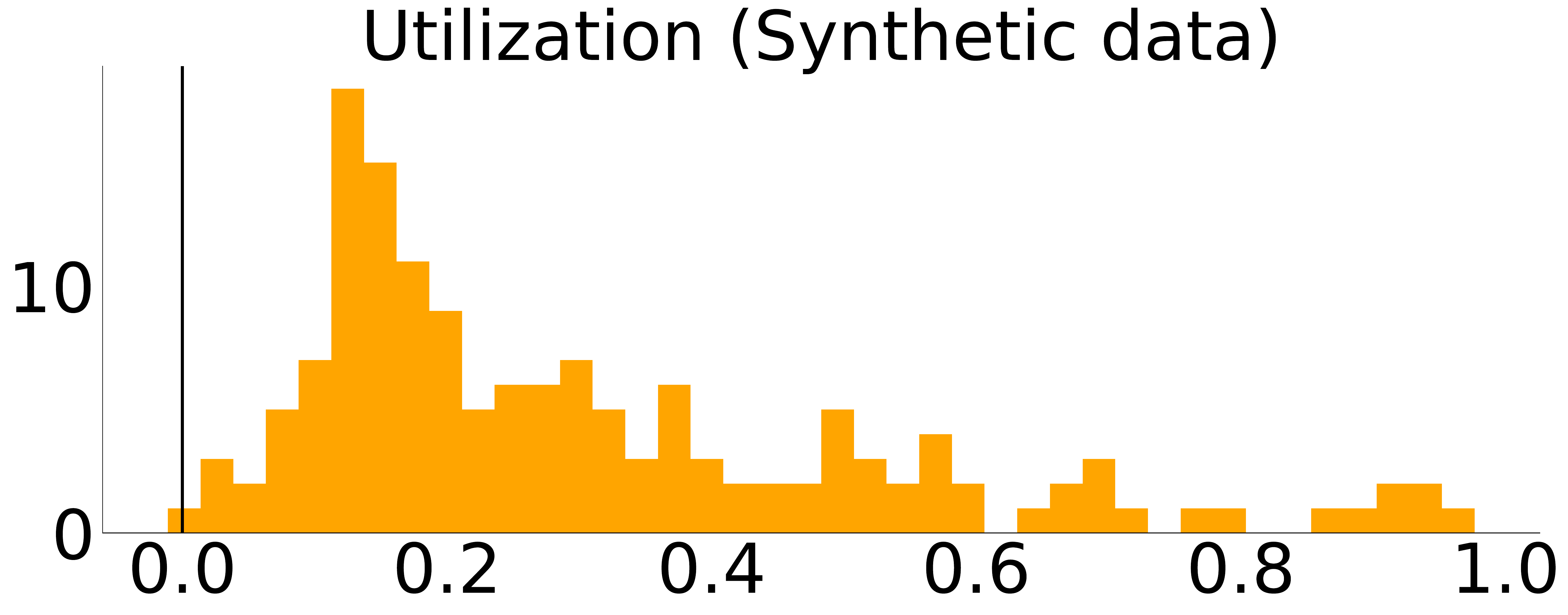}
    \includegraphics[width=0.32\linewidth]{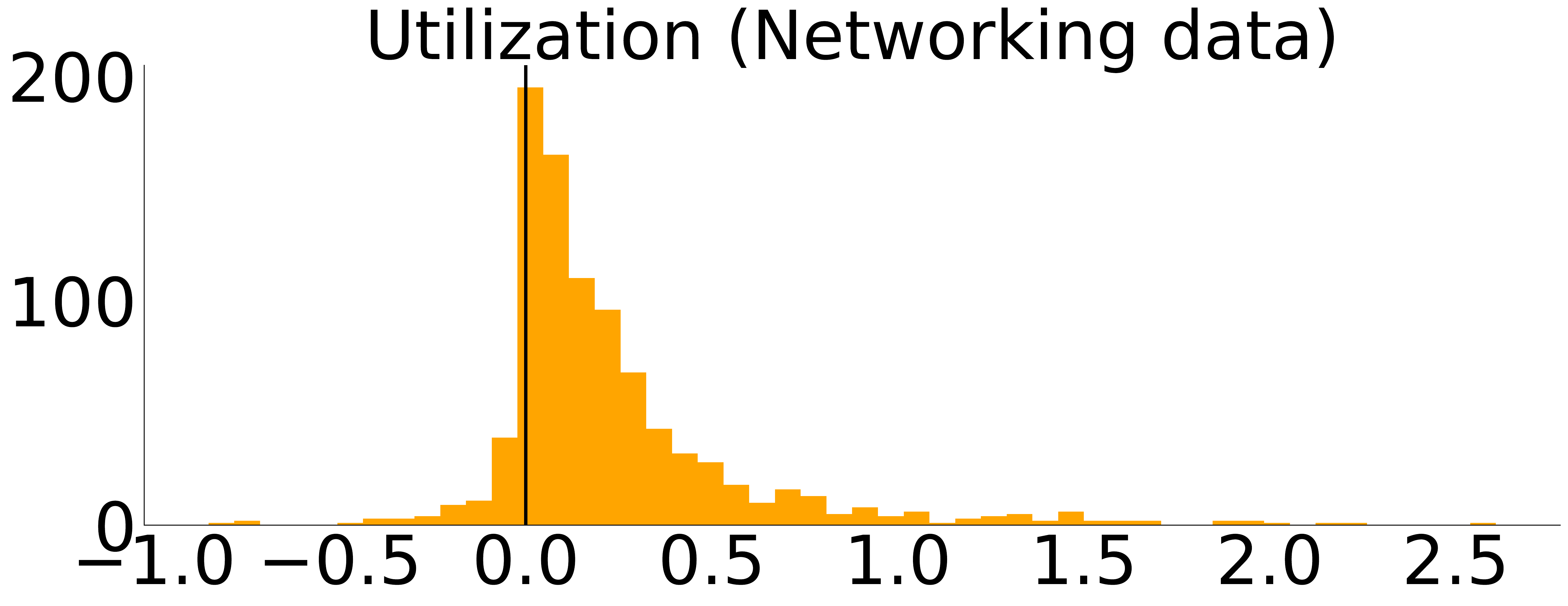}
    \includegraphics[width=0.32\linewidth]{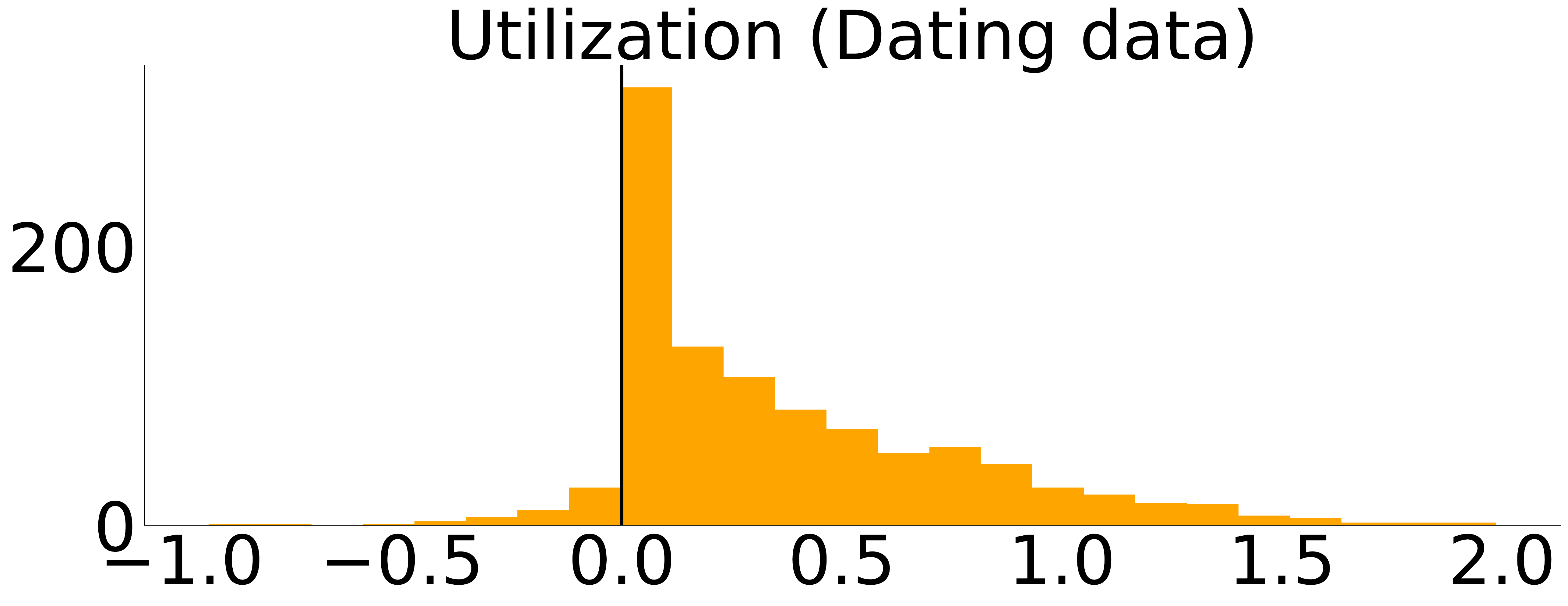}
    
    \includegraphics[width=0.32\linewidth]{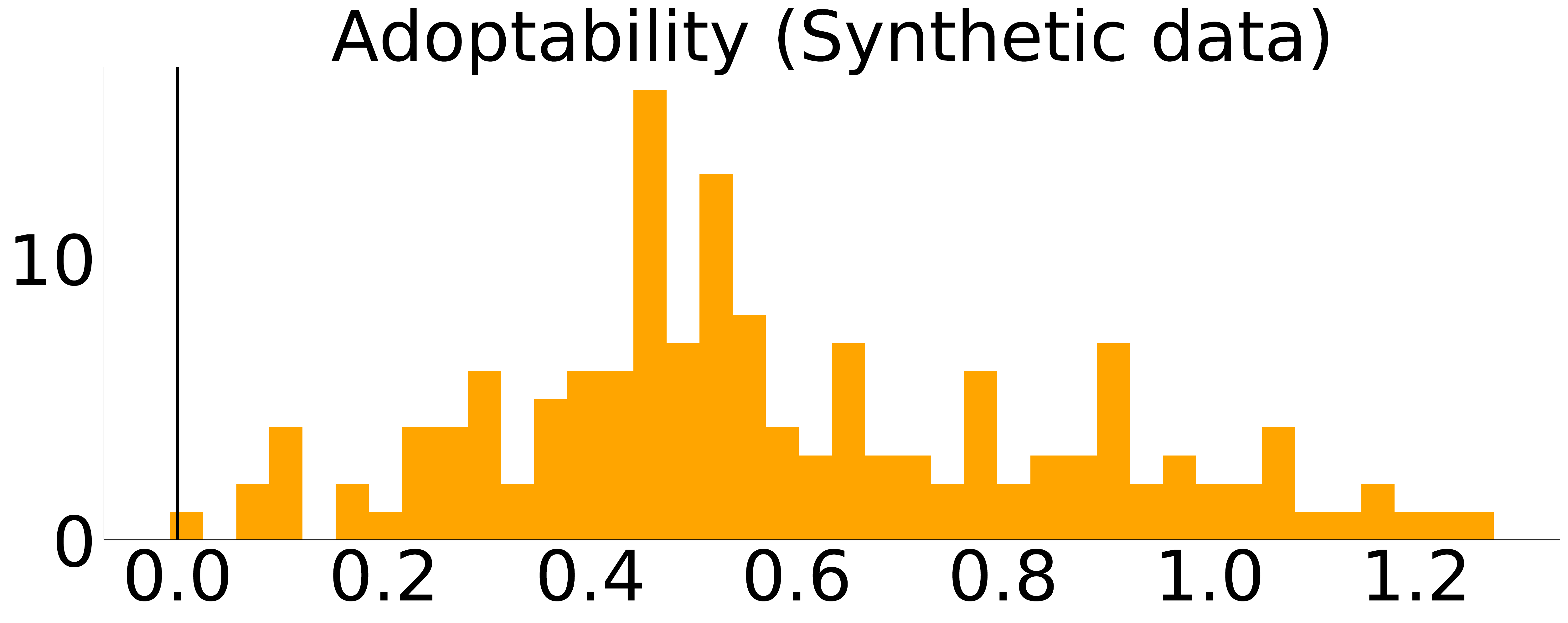}
     \includegraphics[width=0.32\linewidth]{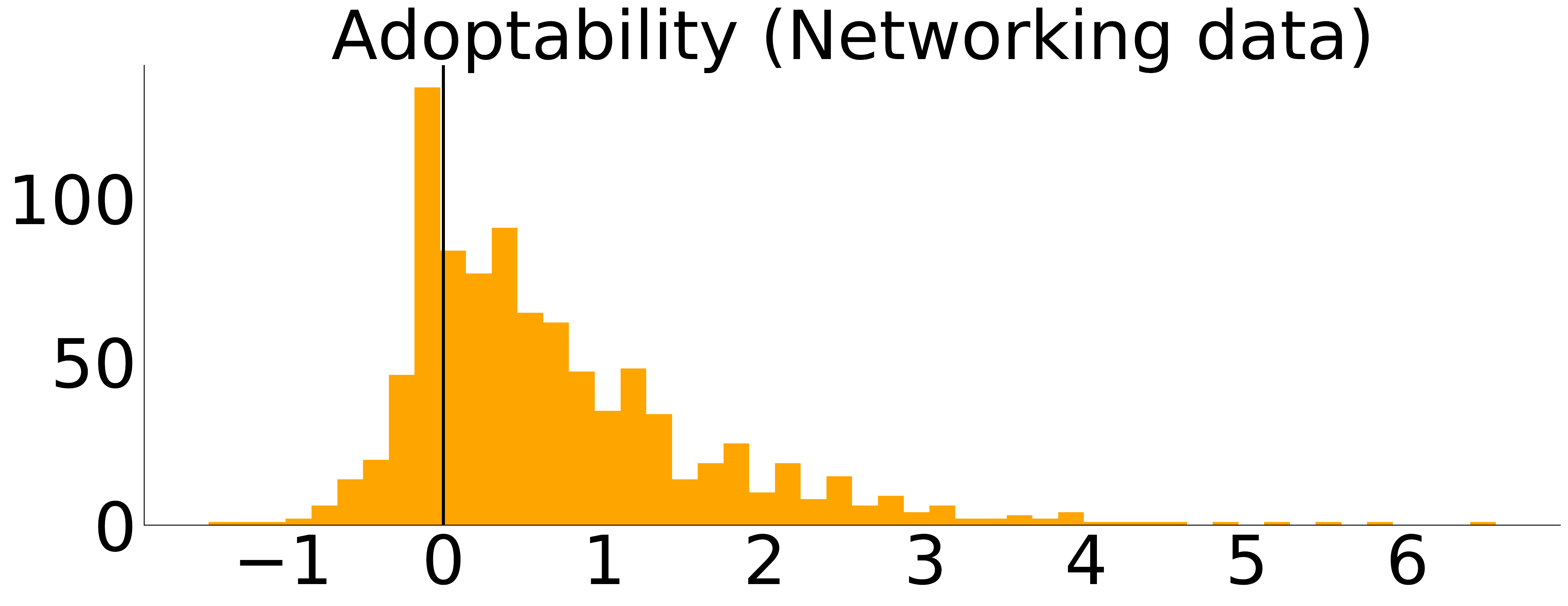}
     \includegraphics[width=0.32\linewidth]{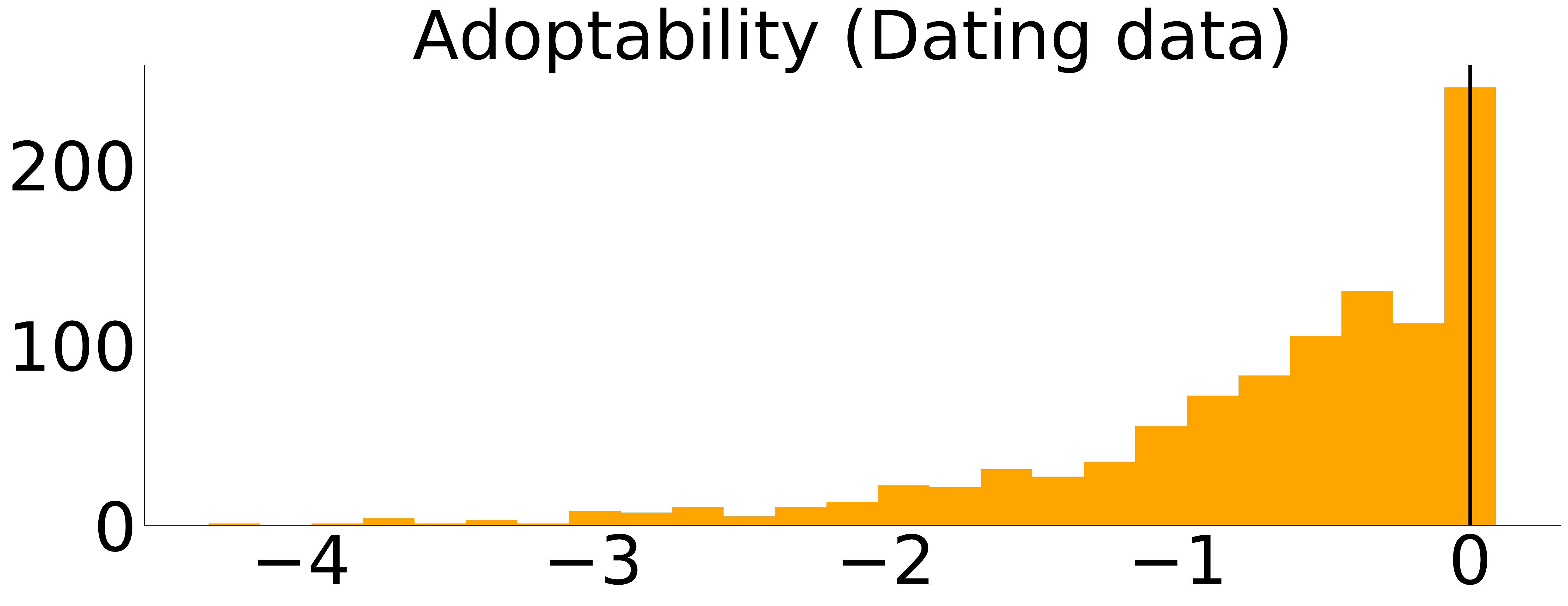}
     
    \includegraphics[width=0.32\linewidth]{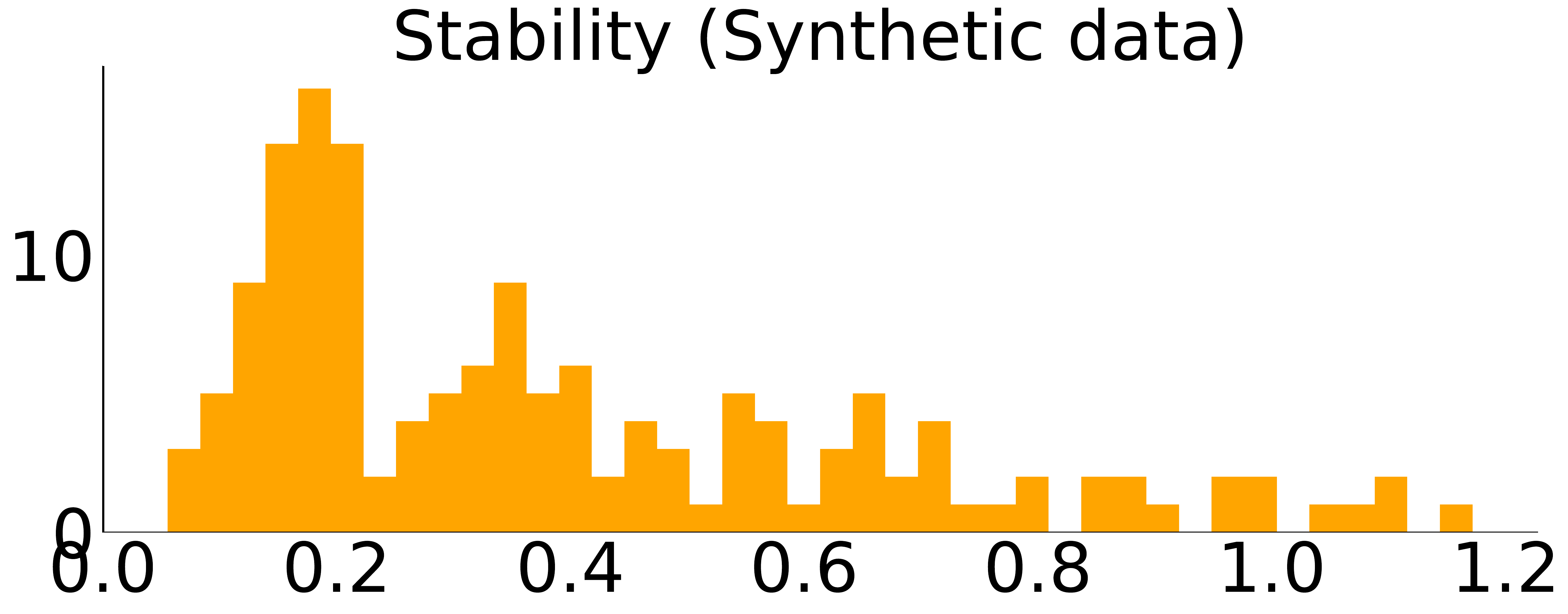}
    \includegraphics[width=0.32\linewidth]{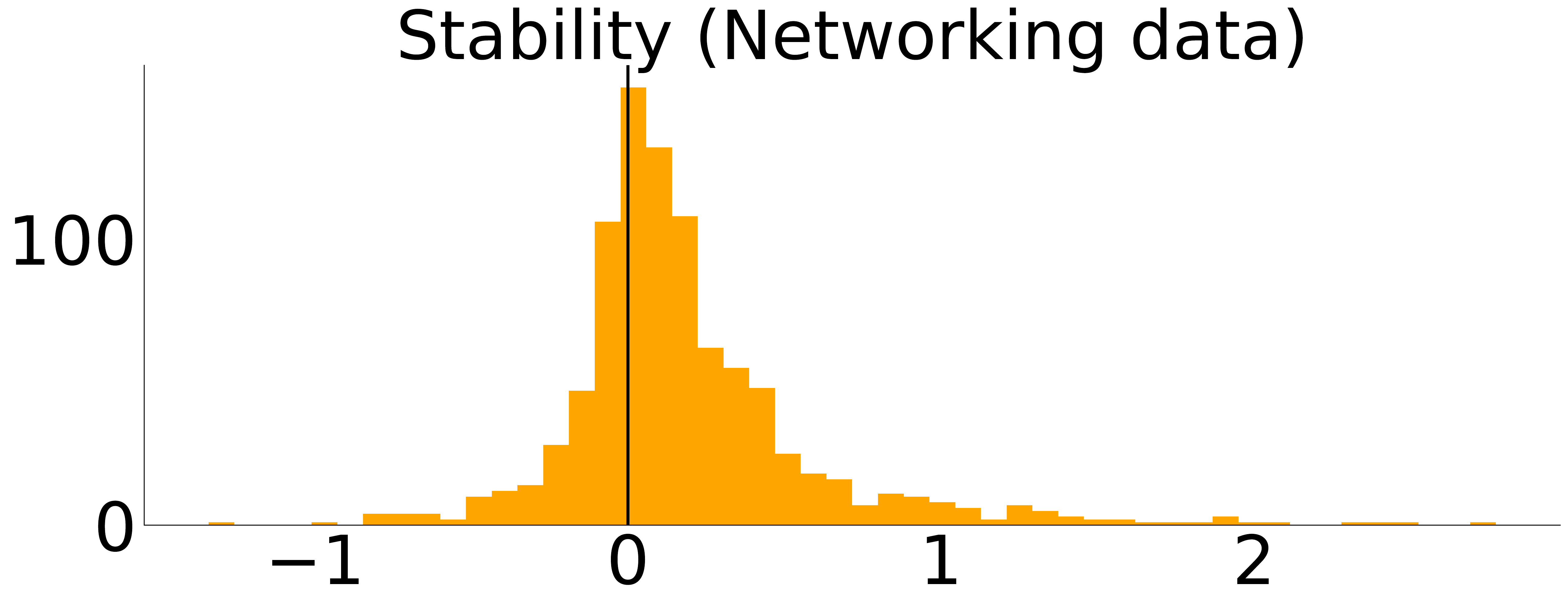}
    \includegraphics[width=0.32\linewidth]{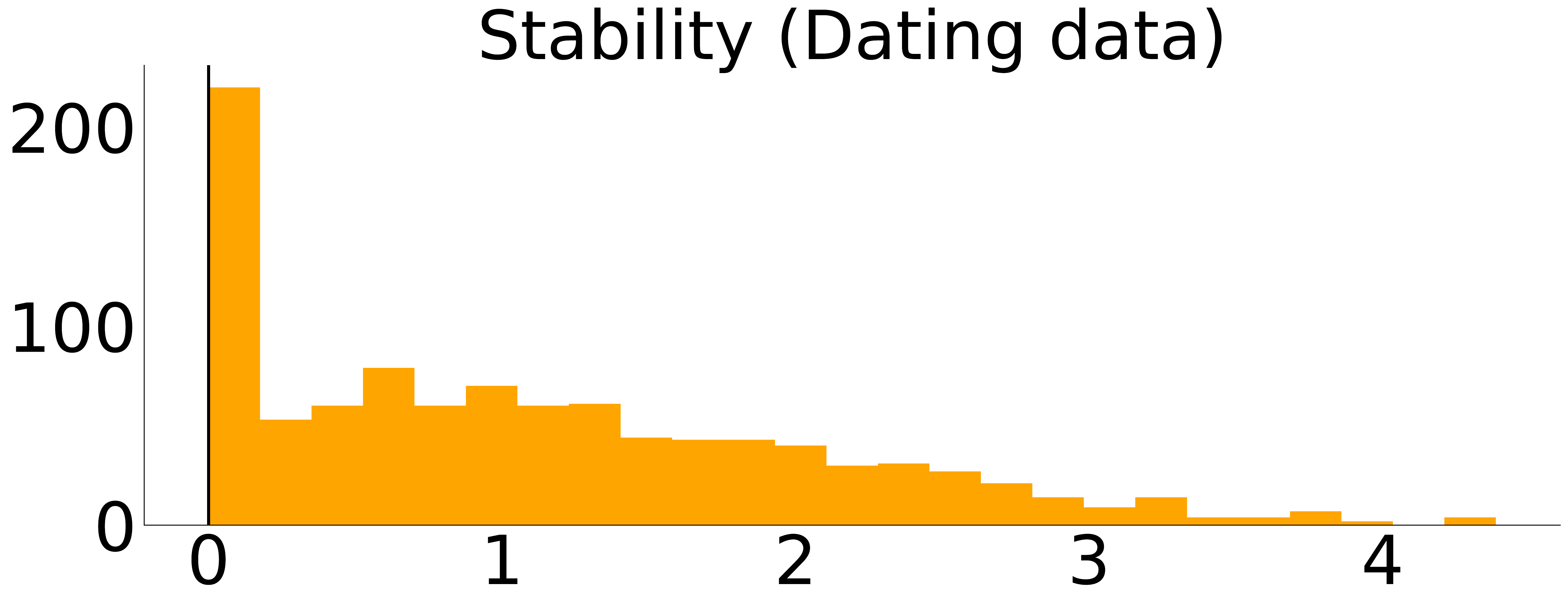}

    \caption{Histogram of individual utility gains on synthetic data with $\lambda=0.5$, random, $\examfunc{x}=1/x$ and $n=100$ (left column) and two real-world datasets: Networking Recommendation (middle column) and Online Dating (right column). (Top) Gain when all users switch from \naivename\ to \swname. (Middle) Gain from adopting \swname\ even though others are not. (Bottom) Gain from not abandoning \swname\ if all others keep using it. }
    \label{fig:standard_stable_adopt}
\end{figure}
% \begin{figure}[t!]
%     \centering
%     \includegraphics[width=0.32\linewidth]{figures/utility_gain_system_rand.png}
%     \includegraphics[width=0.32\linewidth]{figures/utility_gain_system_rand_adoptability.png}
%     \includegraphics[width=0.32\linewidth]{figures/utility_gain_system_rand_stability.png}
    
%     \caption{Histogram of individual utility gains on synthetic data ($\lambda=0.5$, random, $\examfunc{x}=1/x$, $n=100$). (Top) Gain when all users switch from \textbf{\naivename}\ to \swname. (Middle) Gain from adopting \swname\ even though others are not. (Bottom) Gain from not abandoning \swname\ if all others keep using it.}
%     \label{fig:standard_stable_adopt}
% \end{figure}
% \magd{We need to include a little bit of text here to discuss this plot. }
\begin{figure*}
    \centering
    \includegraphics[width=\textwidth]{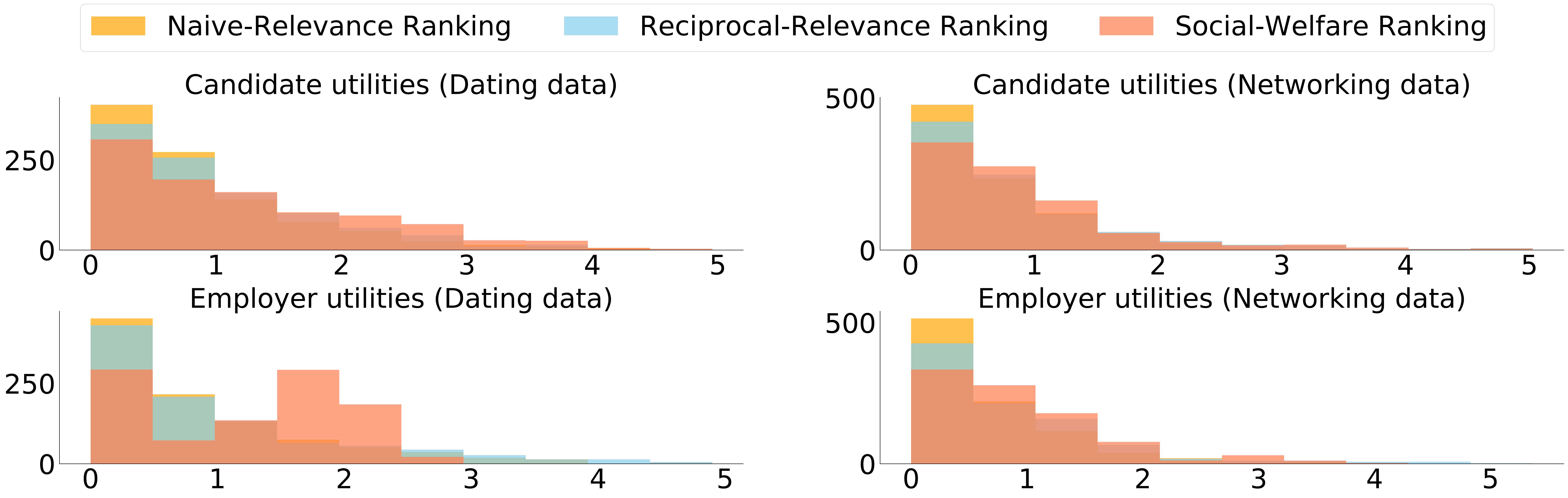}
    \caption{Histogram of candidate and employer utilities on two real-world datasets.}
    \label{fig:fairness_real}
\end{figure*}

\subsection{Discussion on Comparison with Stable Matching}
\label{app:stable}
Our ranking problem is fundamentally different from stable matching problems considered in the economics literature \cite{gale1962college, masarani1989existence}. First, our ranking problem aims at assisting users in forming their preferences over the options by focusing the user's attention, in contrast to assuming that users can readily state their preferences for the matching procedure. 
However, it still leaves open the question whether stable matching procedures could be used to solve our optimization problem. 

To address this question, consider the following simplified \emph{top-1 ranking problem} where all participants only examine position one of their ranking (i.e., $\examfunc{1}=1$ and $\examfunc{k}=0$ for all $k$ and for both sides). Furthermore, we assume that the relevance probabilities $\canpre{c}{j}$ and $\jobpre{j}{c}$ imply a preference order for each candidate and employer. The following proposition clarifies that even in this simplified setting stable matching and social-welfare maximization are not equivalent.
\begin{restatable}{proposition}{comparemm}
\label{compare_mm}
	A matching that is stable is not necessarily a social-welfare optimal top-1 ranking, and top-1 rankings that are not stable matchings can have better social welfare than matchings that are stable.
\end{restatable}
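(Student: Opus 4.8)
The plan is to prove both assertions simultaneously by exhibiting a single $2\times 2$ instance, after first reducing the top-1 regime to a transparent combinatorial form. Since $\examfunc{1}=1$ and $\examfunc{k}=0$ for $k>1$ on both sides, each candidate examines—and hence applies to—only the employer placed first in its ranking, and each employer can only reply to its single top applicant under $\jobpre{j}{\cdot}$. Thus any top-1 policy that assigns distinct employers to distinct candidates induces a one-to-one assignment $M$ with no crowding: for a matched pair $(c,j)$ the candidate is the unique applicant, so $\rankinapplied{\pi}{j}{c}=1$ and employer $j$ replies with probability exactly $\jobpre{j}{c}$. The expected contribution of the pair is therefore $\canpre{c}{j}\jobpre{j}{c}$, and $\swobj{\pi}=\sum_{(c,j)\in M}\canpre{c}{j}\jobpre{j}{c}$. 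Hence maximizing social welfare among top-1 rankings is a maximum-weight matching under edge weights $w(c,j)=\canpre{c}{j}\jobpre{j}{c}$, and the proposition reduces to the classical fact that a maximum-weight matching need not be stable.

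The construction rests on a decoupling: stability depends only on the preference \emph{orders} that $\canpre{c}{\cdot}$ and $\jobpre{j}{\cdot}$ induce \emph{within} each agent, whereas each welfare weight $w(c,j)$ is a \emph{product across the two sides}. I can therefore fix the orders to pin down the stable matching while independently tuning magnitudes to control the weights. Concretely, I would take $\canset=\{c_1,c_2\}$, $\jobset=\{j_1,j_2\}$ and arrange both candidates to rank $j_1$ first and both employers to rank $c_1$ first. Assuming all partners acceptable (all relevances positive), every stable matching is perfect, Gale--Shapley yields the unique stable matching $M_{\mathrm{stab}}=\{(c_1,j_1),(c_2,j_2)\}$ (the ``diagonal''), and the ``anti-diagonal'' $M_{\mathrm{cross}}=\{(c_1,j_2),(c_2,j_1)\}$ is blocked by $(c_1,j_1)$, since $c_1$ ranks $j_1$ first and $j_1$ ranks $c_1$ first; hence $M_{\mathrm{cross}}$ is not stable.

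The remaining step—and the main obstacle—is choosing magnitudes so that $M_{\mathrm{cross}}$ nevertheless has strictly larger weight. The difficulty is a built-in bias toward the diagonal: the mutually top-ranked pair $(c_1,j_1)$ carries the within-agent \emph{largest} factor on both sides, so its weight is naturally large and a naive symmetric choice makes the diagonal win. I resolve this by exploiting the freedom in absolute scale: keep $(c_1,j_1)$ moderate, make the opposite diagonal pair $(c_2,j_2)$—where both sides rank the partner \emph{last}—carry tiny factors so that $w(c_2,j_2)\approx 0$, and give each crossing pair a large factor on the side whose partner is ranked first. A clean instance is $\canpre{c_1}{j_1}=\jobpre{j_1}{c_1}=0.5$, $\canpre{c_1}{j_2}=\jobpre{j_1}{c_2}=0.4$, $\canpre{c_2}{j_1}=\jobpre{j_2}{c_1}=0.9$, $\canpre{c_2}{j_2}=\jobpre{j_2}{c_2}=0.1$. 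One verifies the orders are exactly as required above; the stable policy achieves $0.5\cdot 0.5 + 0.1\cdot 0.1 = 0.26$ expected matches, while the crossing policy achieves $0.4\cdot 0.9 + 0.9\cdot 0.4 = 0.72$. This single instance establishes both claims: $M_{\mathrm{stab}}$ is not welfare-optimal ($0.26<0.72$), and the non-stable top-1 ranking $M_{\mathrm{cross}}$ strictly exceeds it in social welfare. Everything beyond the magnitude choice is direct verification of preference orders, the blocking pair, and two weight sums.
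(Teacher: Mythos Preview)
Your proof is correct and follows the same overall strategy as the paper—proof by explicit counterexample—but your instance is smaller and your framing is sharper. The paper uses a $3\times 3$ market with relevance values $1,0.9,0.1$ and compares the stable matching $\{(c_1,j_1),(c_2,j_2),(c_3,j_3)\}$ (welfare $2.01$) against the non-stable top-1 assignment $\{(c_1,j_3),(c_2,j_2),(c_3,j_1)\}$ (welfare $2.8$). You instead give a $2\times 2$ instance and, more importantly, preface it with the reduction of the top-1 regime to maximum-weight matching under $w(c,j)=\canpre{c}{j}\jobpre{j}{c}$, which makes explicit why the phenomenon is exactly the classical gap between stable and max-weight matchings. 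One minor remark: your sentence ``maximizing social welfare among top-1 rankings is a maximum-weight matching'' tacitly assumes the optimum is attained on a one-to-one assignment; that is true here (any crowded top-1 ranking is weakly dominated, since the losing applicant contributes zero and can be reassigned), but you do not strictly need it—the comparison $0.26<0.72$ between your two one-to-one assignments already suffices for both claims.
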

\begin{proof}
We prove the claim by considering the two-sided market shown below. There are three job candidates $\{c_1, c_2, c_3\}$ and three employers $\{j_1, j_2, j_3\}$. The candidates' relevance table is shown at left, where each row is a ranked list of employers sorted by the candidate-side relevance probability $\canpre{c}{j}$. Similarly, the table on the right is the ranked list of candidates sorted by the employer-side relevance probability $\jobpre{j}{c}$. 
\begin{center}
\begin{tabular}{ cc } 
User Relevance Table  & Employer Relevance Table\\
\begin{tabular}{ |c |c  c  c|}
\hline
 $c_1$ & $j_1$ & $j_3$ & $j_2$ \\ 
 \hline
 $c_2$ & $j_2$ & $j_1$ & $j_3$ \\  
 \hline
 $c_3$ & $j_1$ & $j_2$ & $j_3$\\  
 \hline
 $\canpre{c}{j}$ & $1$ & $0.9$ & $0.1$\\  
 \hline
\end{tabular} &
\begin{tabular}{ |c |c  c  c|}
\hline
 $j_1$ & $c_1$ & $c_3$ & $c_2$ \\ 
 \hline
 $j_2$ & $c_2$ & $c_1$ & $c_3$ \\  
 \hline
 $j_3$ & $c_1$ & $c_2$ & $c_3$\\  
 \hline
 $\jobpre{j}{c}$ & $1$ & $0.9$ & $0.1$\\  
 \hline
\end{tabular}\\
\end{tabular}
\end{center}
The matching $\{(c_1, j_1), (c_2,j_2), (c_3,j_3)\}$ is stable, and it provides a social welfare of $1+1+0.1\times 0.1=2.01$. Now consider the ranking policy $\pi$ so that $\pi(\cdot|c_1)$ recommends $j_3$ at rank 1 with probability 1, $\pi(\cdot|c_2)$ recommends $j_2$ at rank 1 with probability 1, and $\pi(\cdot|c_3)$ recommends $j_1$ at rank 1 with probability 1. This ranking policy provides a social welfare of $1 + 0.9+0.9 = 2.8$, but it is not a stable matching ($c_1$ and $j_1$ prefer each other over their current match).
\end{proof}

\subsection{Proof}
\label{app: proof}
In this section, we provide the proof of lemmas and theorems stated in the main paper.

\setcounter{theorem}{2}

\subsubsection{Proof of Theorem 2}
\subopt*
\begin{proof}
In this theorem we show that the gap of the naive relevance-based ranking algorithm and the optimal solution by our algorithm (Equation~\ref{eq:lower_opt}) can be $\Theta(n)$, and it is achieved by identifying some specific instance of the two-sided matching market (particular choice of relevance probability and examination model). Consider the following instance of a fixed two-sided market: there are $n$ employers (denoted as $\{j_1, j_2, \cdots, j_n\}$) and $n$ candidates (denoted as $\{c_1, c_2, \cdots, c_n\}$) in the market, with $n\geq 2$. On the candidate side, they have highly correlated relevance probability over jobs, while each job has different relevance probability over candidates. The employers' relevance probability ordering (ordered by descending order of $\jobpre{j}{\cdot}$) is given by the following circulant matrix:
\begin{center}
\begin{tabular}{ c |c  c  c c c}
 $j_1$ & $c_1$ & $c_2$ & $c_3$ & $\cdots$ & $c_n$  \\ 
 \hline
 $j_2$ & $c_2$ & $c_3$ & $\cdots$ & $c_n$ & $c_1$\\  
 \hline
 $j_3$ & $c_3$ & $\cdots$ & $c_n$ & $c_1$ & $c_2$\\  
 \hline
 \vdots &  \vdots &\vdots &\vdots&\vdots&\vdots  \\
 $j_k$ & $c_k$  & $\cdots$& $\cdots$& $c_{k-2}$ &$c_{k-1}$ \\
  \vdots &  \vdots &\vdots&\vdots&\vdots&\vdots \\
 \hline
 $j_n$ &   $c_n$ & $c_1$ & $\cdots$ & $c_{n-2}$ & $c_{n-1}$
\end{tabular}
\end{center}
while the top 3 employers of the candidates' relevance probability list (ordered by descending order of $\canpre{c}{\cdot}$) is:
\begin{center}
\begin{tabular}{ c |c  c  c c c}
 $c_1$ & $j_1$ & $j_2$ & $j_3$  \\ 
 \hline
 $c_2$ &$j_1$ & $j_2$ & $j_3$ \\  
 \hline
 $c_3$ & $j_1$ & $j_3$ & $j_2$  \\  
 \hline
 \vdots &  \vdots &  \vdots & \vdots  \\
 \hline
 $c_k$ &  $j_1$ & $j_k$  & $j_{k-1}$ \\
 \hline
 \vdots &  \vdots&  \vdots &  \vdots  \\
 \hline
 $c_n$ &   $j_1$ & $j_n$ & $j_{n-1}$
\end{tabular}
\end{center}
and for each candidate $c_i$, for the employers $j$ that not in the top 3 position of $c_i$'s relevance list, we set arbitrary relevance probability $\canpre{c_i}{j}$ with $\canpre{c_i}{j}\in [0,\canpre{c_i}{\caninrank{c_i}{3}}]$. 

Given this, we define the cardinal relevance probability $\canpre{c}{\cdot}$ and $\jobpre{j}{\cdot}$ from the above ordinal ordering as:
$$\canpre{c_i}{j_k} = \frac{n-(\rankinlist{c_i}{j_k}-1)}{n} \hspace{1cm}
\jobpre{j_k}{c_i} = \frac{n-(\rankinlist{j_k}{c_i}-1)}{n}$$
The examination model $E_m$ in this case is given by: 
\[
    \examfunc{x}= 
\begin{cases}
    0.1^{x-1},& \text{if } x\leq m\\
    0,              & \text{otherwise}
\end{cases}
\]
This mimics the scenario that even as the market size $n$ grows, people tends to only examine the top $m$ recommendations due to time and resource constraints. Here we choose $m=2$ for simplicity in the proof, but it could be generalize to any fixed $m$. As candidates only examine the top 2 positions, then only the top 2 positions in our rankings do matter. Therefore for all rankings we compare in the proof, we will only list the top 2 position in the ranking. The naive one-side relevance based ranking $\naivepolicy$ in this case corresponds to ignoring the two-sided perspective, and only recommending the employer to candidates by how relevant they are in the candidate's perspective. Therefore $\naivepolicy$ is a deterministic policy with probability 1 on the following permutation $\sigma^n$:
\begin{equation}
\begin{aligned}
       &\rank{j_1}{\sigma^n(c_1)} = 1 \hspace{1cm} \rank{j_2}{\sigma^n(c_1)}  =2 \\
    &\rank{j_1}{\sigma^n(c_i)} = 1 \hspace{1cm} \rank{j_i}{\sigma^n(c_i)}  =2 \hspace {0.2cm}\forall i\geq 2         
\end{aligned}
\end{equation}
Now we analyze the utility $\swobj{\naivepolicy}$, it is worth noting that the employers will also only examine the top 2 positions. Then we have
\begin{equation}
\begin{aligned}
        \indiswobj{\naivepolicy}{c_1} &= \PP(\applied{\naivepolicy}{c_1}{j_1}) + \PP(\applied{\naivepolicy}{c_1}{j_2})\\
        &=1+\frac{0.1(n-1)}{n}\times \frac{1}{n}\bigg(\PP(\rankinapplied{\naivepolicy}{j_2}{c_1}=1)+0.1\PP(\rankinapplied{\naivepolicy}{j_2}{c_1}=2)\bigg)\\
         &=1+\frac{0.1(n-1)}{n}\times \frac{1}{n}\bigg(1-\frac{0.1(n-1)}{n}+0.1\frac{0.1(n-1)}{n} \bigg)\\
        & \leq 1.1+\frac{0.9}{n}
\end{aligned}
\end{equation}
For candidate $c_2$, we have
\begin{equation}
\begin{aligned}
        \indiswobj{\naivepolicy}{c_2} &= \PP(\applied{\naivepolicy}{c_2}{j_1}) + \PP(\applied{\naivepolicy}{c_2}{j_2})\\
        &=\frac{0.1(n-1)}{n}+\frac{0.1(n-1)}{n}= \frac{0.2(n-1)}{n}
\end{aligned}
\end{equation}
For candidate $c_i$ with $i\geq 3$, we have
\begin{equation}
\begin{aligned}
        \indiswobj{\naivepolicy}{c_i} &= \PP(\applied{\naivepolicy}{c_i}{j_1}) + \PP(\applied{\naivepolicy}{c_i}{j_i})\\
        &=0+\frac{0.1(n-1)}{n}= \frac{0.1(n-1)}{n}
\end{aligned}
\end{equation}
Then the social welfare generated by policy $\naivepolicy$ is:
\begin{equation}
\begin{aligned}
            \swobj{\naivepolicy} &\leq 1.1+\frac{0.9}{n}+\frac{0.2(n-1)}{n}+\frac{0.1(n-1)(n-2)}{n}\\
            &\leq 1+0.1n+\frac{1}{n}
\end{aligned}
\end{equation}
Now we consider a different policy $\pi^s$ which is a deterministic policy with probability 1 on the following permutation $\sigma^s$:
\begin{equation}
\begin{aligned}
       &\rank{j_1}{\sigma^s(c_1)} = 1 \hspace{1cm} \rank{j_2}{\sigma^s(c_1)}  =2 \\
    &\rank{j_i}{\sigma^s(c_i)} = 1 \hspace{1cm} \rank{j_{i-1}}{\sigma^s(c_i)}  =2 \hspace {0.2cm}\forall i\geq 2\\
    \end{aligned}
\end{equation}
Similarly we analyze the utility $\swobj{\pi^s}$ and we have:
\begin{equation}
\begin{aligned}
        \indiswobj{\pi^s}{c_1} &= \PP(\applied{\pi^s}{c_1}{j_1}) + \PP(\applied{\pi^s}{c_1}{j_2})\\
        &=1+\frac{0.1(n-1)}{n}\times \frac{1}{n}\bigg(\PP(\rankinapplied{\pi^s}{j_2}{c_1}=1)+0.1\PP(\rankinapplied{\pi^s}{j_2}{c_1}=2)\bigg)\\
         &=1+\frac{0.1(n-1)}{n}\times \frac{1}{n}\bigg(1-\frac{(n-1)}{n}+0.1\frac{(n-1)}{n} \bigg)\\
        % & \geq 1.08-\frac{0.08}{n}
        & \geq 1+\frac{0.01(n-1)}{n^2}
\end{aligned}
\end{equation}
For candidate $c_2$, we have
\begin{equation}
\begin{aligned}
        \indiswobj{\pi^s}{c_2} &= \PP(\applied{\pi^s}{c_2}{j_2}) + \PP(\applied{\pi^s}{c_2}{j_1})\\
        &=\frac{(n-1)}{n}+\frac{0.1^2(n-1)}{n}= \frac{1.01(n-1)}{n}
\end{aligned}
\end{equation}
For candidate $c_i$ with $i\geq 3$, we have
\begin{equation}
\begin{aligned}
        \indiswobj{\pi^s}{c_i} &= \PP(\applied{\pi^s}{c_i}{j_i}) + \PP(\applied{\pi^s}{c_i}{j_{i-1}})\\
        &=\frac{(n-1)}{n}+\frac{0.1(n-2)}{n}\bigg(\PP(\rankinapplied{\pi^s}{j_{i-1}}{c_i}=1)+0.1\PP(\rankinapplied{\pi^s}{j_{i-1}}{c_i}=2)\bigg)\\
        &=\frac{(n-1)}{n}+\frac{0.1(n-2)}{n}\bigg(1-\frac{n-1}{n}+\frac{0.1(n-1)}{n}\bigg)\\
        &\geq \frac{n-1}{n}
\end{aligned}
\end{equation}
Then the social welfare generated by policy $\pi^s$ is:
\begin{equation}
\begin{aligned}
            \swobj{\pi^s} &\geq 1+\frac{0.01(n-1)}{n^2}+ \frac{1.01(n-1)}{n} +\frac{(n-1)(n-2)}{n}\\
            &\geq n-0.99+\frac{0.99}{n}
\end{aligned}
\end{equation}
Now we denote the optimal solution for social welfare objective as $\optimalpolicy$ and we have $\swobj{\optimalpolicy} - \swobj{\naivepolicy}\geq \swobj{\pi^s} - \swobj{\naivepolicy} = \Theta(n)$.
\end{proof}

\subsubsection{Proof of Lemma 3}
\pdf*
\begin{proof}
Fix any stochastic ranking policy $\pi$, for employer $j$ and candidate $c'$, we use $\applied{\pi}{c'}{j}$ to denote the event that candidate $c'$ applies to employer $j$, and it is easy to see that $\applied{\pi}{c'}{j}$ is a Bernoulli RV with parameter $\PP(\applied{\pi}{c'}{j}=1) = \clickpro{\pi}{c'}{j}$. Also, the rank of candidate $c$ in employer $j$'s list is $1+$ number of candidates who has higher relevance probability to $j$ (i.e., candidates in the set $\priorset{j}{c}$) and also applied to employer $j$.
\begin{equation}
	\rankinapplied{\pi}{j}{c} \stackrel{\Dcal}{\sim} 1 + \sum_{l=1}^{\rankinlist{j}{c}-1} \applied{\pi}{\caninrank{j}{l}}{j}
\end{equation}
Then it is easy to see that the sum of independent Bernoulli trails (but not necessarily identically distributed) follows Poisson Binomial with probabilities $ \big[\clickpro{\pi}{\caninrank{j}{1}}{j}, \clickpro{\pi}{\caninrank{j}{2}}{j}, \cdots, \clickpro{\pi}{\caninrank{j}{\rankinlist{j}{c}-1}}{j}\big]$, and the PMF could be easily shown as in Equation~\ref{eq:dist}.
\end{proof}

\subsubsection{Proof of Theorem 4}
\uti*
\begin{proof}
The proof is based on the convexity of $v$ and the simple form of the expectation of the Poisson Binomial random variable.
\begin{equation}
\begin{aligned}
\label{SW_lower}
\swobj{\pi}
	&= \sum_{c\in\canset}\sum_{j\in\jobset} \big(\canpre{c}{j}\basisvec{j}^T\dsmatrix{\pi}{c} \examvec\big)\jobpre{j}{c}\EE\big[v\big(\rankinapplied{\pi}{j}{c}\big)\big]\\
	&\geq \sum_{c\in\canset}\sum_{j\in\jobset} \big(\canpre{c}{j}\basisvec{j}^T\dsmatrix{\pi}{c} \examvec\big)\jobpre{j}{c}v\big(\EE\big[\rankinapplied{\pi}{j}{c}\big]\big)\\
	&=\sum_{c\in\canset}\sum_{j\in\jobset} \canpre{c}{j}\jobpre{j}{c}v\big(1+\!\!\!\!\!\sum_{c' \in \priorset{j}{c}}\!\!\!\!\!\clickpro{\pi}{c'}{j}\big)\basisvec{j}^T \dsmatrix{\pi}{c} \examvec\\
	&=\sum_{c\in\canset}\sum_{j\in\jobset} \canpre{c}{j}\jobpre{j}{c}v\big(1+\!\!\!\!\!\sum_{c' \in \priorset{j}{c}}\!\!\!\!\!\canpre{c'}{j}\basisvec{j}^T \dsmatrix{\pi}{c'} \examvec\big)\basisvec{j}^T \dsmatrix{\pi}{c} \examvec\\
	&:=\swlowobj{\pi}
\end{aligned}
\end{equation} 
The inequality comes from Jensen's inequality for the convex $\examfunc{\cdot}$. The final equality is based on the expectation of the $\rankinapplied{\pi}{j}{c}$:
\begin{equation}
	\EE[\rankinapplied{\pi}{j}{c}] = 1 + \EE[\sum_{l=1}^{\rankinlist{j}{c}-1} \applied{\pi}{\caninrank{j}{l}}{j}] = 1+\sum_{c' \in \priorset{j}{c}}\clickpro{\pi}{c'}{j}
\end{equation}
\end{proof}

\subsection{Additional Optimization Algorithm}
In this section, we provide the algorithms for optimizing the social welfare lower bound (Equation~\ref{eq:lower_opt}) using either Frank-Wolfe or projected gradient descent.
\begin{algorithm}
\SetAlgoLined
\KwResult{the doubly stochastic matrices, one for each candidate: $\dsmatrices{\pi}=\{\dsmatrix{\pi}{c}\}_{c\in\Ccal}$ }
\textbf{Input:} relevance probabilities $\canpre{c}{\cdot}$ and $\jobpre{j}{\cdot}$, examination function $\examfunc{\cdot}$, stopping criterion $\epsilon$, timesteps $T$, learning rate $\eta_t$\;
 Initialize $\dsmatrix{\pi}{c} = \mathbf{1}\mathbf{1}^T/|\jobset|, \forall c\in\canset$\;
 \For{$t=0,1,\cdots,T$}{
 $S^{*}_{\canset}\in\argmin_{S_{\canset}} -\nabla \underline{\textbf{SW}}(\dsmatrices{\pi})^T S_{\canset}$\;
\hspace{0.7cm} subject to \hspace{0.05cm} $\textbf{1}^T S_{c} = \textbf{1}^T, S_c\textbf{1}=\textbf{1}, \forall c\in\canset$\; \hspace{2.2cm} $S_{\canset}=\{S_c\}_{c\in\canset}$\;
  
 $\dsmatrix{\pi}{\canset} \leftarrow (1-\eta_t)\dsmatrices{\pi} + \eta_t S^{*}_{\canset}$}
 \caption{Social-Welfare Optimization via Frank-Wolfe}
 \label{alg:1}
\end{algorithm}

\label{app:alg}
\begin{algorithm}
\SetAlgoLined
\KwResult{the doubly stochastic matrices: $\dsmatrices{\pi}$ }
\textbf{Input:} preferences $\canpre{c}{\cdot}$, $\jobpre{j}{\cdot}$, examination $\examfunc{\cdot}$ and $\examvec$, stopping criteria $\epsilon$, timestep $T$, learning rate $\eta$\;
 Initialize $\dsmatrix{\pi}{c} = \mathbf{1}\mathbf{1}^T/|\jobset|, \forall c\in\canset$\;
 \For{$t=0,1,\cdots,T$}{
 $\dsmatrix{\pi}{c} \leftarrow \dsmatrix{\pi}{c} + \eta \nabla \underline{\textbf{SW}}$\;
 $D_0=E_0=I$, $k=0$\;
  \While{$\sum_{c}||\textbf{1}^T \dsmatrix{\pi}{c}- \textbf{1}^T||_1\geq \epsilon$ or $\sum_{c}||\dsmatrix{\pi}{c}\textbf{1}- \textbf{1}||_1\geq \epsilon$}{
  $r_k=D_{k-1}\dsmatrix{\pi}{c} E_{k-1}\mathbf{1}$, 
  $D_k = \diag(r^{-1}_k)$\;
  $c^T_k = \mathbf{1}^TD_k \dsmatrix{\pi}{c}E_{k-1}$,
  $E_k = \diag(c^{-1}_k)$\;
  $k \leftarrow k+1$
 }
 $\dsmatrix{\pi}{c} \leftarrow D_{k}\dsmatrix{\pi}{c} E_{k}$}
 \caption{Social-Welfare Optimization via Projected Gradient Descent}
\label{alg:3}
\end{algorithm}

% \subsection{Additional Experiment Results} \label{sec:addexp}
% In this section, we provide the additional results about the individual utility gains for the real-world dataset in Figure~\ref{fig:kdd_combine_exp}.
% \begin{figure}[h!]
%   \centering
%   \includegraphics[width=0.9\linewidth]{figures/utility_gain_system_kdd_util.png}
%   \includegraphics[width=0.9\linewidth]{figures/utility_gain_system_kdd_adoptability.png}
%   \includegraphics[width=0.9\linewidth]{figures/utility_gain_system_kdd_stability.png}
%   \caption{Histogram of individual utility gains on real-world data. (Top) Gain when all users switch from \naivename\ to \swname. (Middle) Gain from adopting \swname\ even though others are not. (Bottom) Gain from not abandoning \swname\ if all others keep using it.}
%   \label{fig:kdd_combine_exp}
% \end{figure}

\subsection{Additional Details on Dataset and Implementation} \label{sec:algdetail}
\paragraph{Datasets.} We calculate the two-sided relevances in the networking recommendation dataset as follows.
In the recommendation phrase, the recommendation system estimated relevance scores between all registered users based on features likes similarity of published articles, past co-authorship, past citations, etc. For each user, the system recommended a ranked list of 150 participants, and we collected different forms of positive interactions between them (such as thumb up, send a message, schedule a meeting). In the relevance imputation phrase, we use an importance-weighted (to de-bias position bias) logistic regression with $L_2$ regularization to learn the relevance function, which allows us to impute directional relevance probabilities between all users. Curiously, the relevance probabilities have substantial directionality, with faculty and postdocs being "crowded". However, the expected number of individuals that each user reaches out is rather uniform between groups. For the online dating dataset \cite{10.1145/2808797.2809282} , the license could be accessible at \url{http://web.archive.org/web/20180402034337/http://www.occamslab.com/petricek/data/}.

\paragraph{Implementation.}
For all experiments, the relevance-based ranking is a deterministic ranking based on descending order of $\canpre{c}{j}$, and the reciprocal-based ranking is a deterministic ranking based on descending order of $\canpre{c}{j}\jobpre{j}{c}$. For our algorithm, \swname, we use Algorithm~\ref{alg:1} with 50 timesteps, stopping criterion $10^{-3}$, and constant learning rate $\eta_t=0.2$. We also tested a decaying learning rate $\eta_t=\frac{1}{t+2}$ and the performance was similar to the constant learning rate. 

The convex solver we use for finding the gradient direction within Algorithm~\ref{alg:1} is \texttt{CVXPY}\cite{diamond2016cvxpy} with SCS. All experiments are performed on a Linux computer cluster. To facilitate reproducibility, we also provide the code along with further experiment details. The dataset for networking recommendation will also be published as a benchmark for research in this field.

\bibliography{refs}

\begin{thebibliography}{10}

\bibitem{agarwal2019estimating}
Aman Agarwal, Ivan Zaitsev, Xuanhui Wang, Cheng Li, Marc Najork, and Thorsten
  Joachims.
\newblock Estimating position bias without intrusive interventions.
\newblock In {\em Proceedings of the Twelfth ACM International Conference on
  Web Search and Data Mining}, pages 474--482, 2019.

\bibitem{alanazi2013people}
Ammar Alanazi and Michael Bain.
\newblock A people-to-people content-based reciprocal recommender using hidden
  markov models.
\newblock In {\em Proceedings of the 7th ACM conference on Recommender
  systems}, pages 303--306, 2013.

\bibitem{almalis2014content}
Nikolaos~D Almalis, George~A Tsihrintzis, and Nikolaos Karagiannis.
\newblock A content based approach for recommending personnel for job
  positions.
\newblock In {\em IISA 2014, The 5th International Conference on Information,
  Intelligence, Systems and Applications}, pages 45--49. IEEE, 2014.

\bibitem{alsaleh2011improving}
Slah Alsaleh, Richi Nayak, Yue Xu, and Lin Chen.
\newblock Improving matching process in social network using implicit and
  explicit user information.
\newblock In {\em Asia-Pacific Web Conference}, pages 313--320. Springer, 2011.

\bibitem{ashlagi2019assortment}
Itai Ashlagi, Anilesh~K Krishnaswamy, Rahul Makhijani, Daniela Saban, and
  Kirankumar Shiragur.
\newblock Assortment planning for two-sided sequential matching markets.
\newblock {\em arXiv preprint arXiv:1907.04485}, 2019.

\bibitem{beutel2019fairness}
Alex Beutel, Jilin Chen, Tulsee Doshi, Hai Qian, Li~Wei, Yi~Wu, Lukasz Heldt,
  Zhe Zhao, Lichan Hong, Ed~H Chi, et~al.
\newblock Fairness in recommendation ranking through pairwise comparisons.
\newblock In {\em Proceedings of the 25th ACM SIGKDD International Conference
  on Knowledge Discovery \& Data Mining}, pages 2212--2220, 2019.

\bibitem{biega2018equity}
Asia~J Biega, Krishna~P Gummadi, and Gerhard Weikum.
\newblock Equity of attention: Amortizing individual fairness in rankings.
\newblock In {\em The 41st international acm sigir conference on research \&
  development in information retrieval}, pages 405--414, 2018.

\bibitem{birkhoff1940lattice}
Garrett Birkhoff.
\newblock {\em Lattice theory}, volume~25.
\newblock American Mathematical Soc., 1940.

\bibitem{chuklin2015click}
Aleksandr Chuklin, Ilya Markov, and Maarten~de Rijke.
\newblock Click models for web search.
\newblock {\em Synthesis lectures on information concepts, retrieval, and
  services}, 7(3):1--115, 2015.

\bibitem{clarke2008novelty}
Charles~LA Clarke, Maheedhar Kolla, Gordon~V Cormack, Olga Vechtomova, Azin
  Ashkan, Stefan B{\"u}ttcher, and Ian MacKinnon.
\newblock Novelty and diversity in information retrieval evaluation.
\newblock In {\em Proceedings of the 31st annual international ACM SIGIR
  conference on Research and development in information retrieval}, pages
  659--666, 2008.

\bibitem{diamond2016cvxpy}
Steven Diamond and Stephen Boyd.
\newblock {CVXPY}: {A} {P}ython-embedded modeling language for convex
  optimization.
\newblock {\em Journal of Machine Learning Research}, 17(83):1--5, 2016.

\bibitem{duan2014linear}
Ran Duan and Seth Pettie.
\newblock Linear-time approximation for maximum weight matching.
\newblock {\em Journal of the ACM (JACM)}, 61(1):1--23, 2014.

\bibitem{gale1962college}
David Gale and Lloyd~S Shapley.
\newblock College admissions and the stability of marriage.
\newblock {\em The American Mathematical Monthly}, 69(1):9--15, 1962.

\bibitem{jarvelin2002cumulated}
Kalervo J{\"a}rvelin and Jaana Kek{\"a}l{\"a}inen.
\newblock Cumulated gain-based evaluation of ir techniques.
\newblock {\em ACM Transactions on Information Systems (TOIS)}, 20(4):422--446,
  2002.

\bibitem{joachims2017unbiased}
Thorsten Joachims, Adith Swaminathan, and Tobias Schnabel.
\newblock Unbiased learning-to-rank with biased feedback.
\newblock In {\em Proceedings of the Tenth ACM International Conference on Web
  Search and Data Mining}, pages 781--789, 2017.

\bibitem{kim2006adaptive}
Il~Yong Kim and OL~De~Weck.
\newblock Adaptive weighted sum method for multiobjective optimization: a new
  method for pareto front generation.
\newblock {\em Structural and multidisciplinary optimization}, 31(2):105--116,
  2006.

\bibitem{li2012meet}
Lei Li and Tao Li.
\newblock Meet: a generalized framework for reciprocal recommender systems.
\newblock In {\em Proceedings of the 21st ACM international conference on
  Information and knowledge management}, pages 35--44, 2012.

\bibitem{liu2020competing}
Lydia~T Liu, Horia Mania, and Michael Jordan.
\newblock Competing bandits in matching markets.
\newblock In {\em International Conference on Artificial Intelligence and
  Statistics}, pages 1618--1628. PMLR, 2020.

\bibitem{masarani1989existence}
F~Masarani and Sadik~S Gokturk.
\newblock On the existence of fair matching algorithms.
\newblock {\em Theory and Decision}, 26(3):305--322, 1989.

\bibitem{mehrotra2020advances}
Rishabh Mehrotra, Ben Carterette, Yong Li, Quanming Yao, Chen Gao, James Kwok,
  Qiang Yang, and Isabelle Guyon.
\newblock Advances in recommender systems: From multi-stakeholder marketplaces
  to automated recsys.
\newblock In {\em Proceedings of the 26th ACM SIGKDD International Conference
  on Knowledge Discovery \& Data Mining}, pages 3533--3534, 2020.

\bibitem{mehrotra2018towards}
Rishabh Mehrotra, James McInerney, Hugues Bouchard, Mounia Lalmas, and Fernando
  Diaz.
\newblock Towards a fair marketplace: Counterfactual evaluation of the
  trade-off between relevance, fairness \& satisfaction in recommendation
  systems.
\newblock In {\em Proceedings of the 27th acm international conference on
  information and knowledge management}, pages 2243--2251, 2018.

\bibitem{mehrotra2020bandit}
Rishabh Mehrotra, Niannan Xue, and Mounia Lalmas.
\newblock Bandit based optimization of multiple objectives on a music streaming
  platform.
\newblock In {\em Proceedings of the 26th ACM SIGKDD International Conference
  on Knowledge Discovery \& Data Mining}, pages 3224--3233, 2020.

\bibitem{mine2013reciprocal}
Tsunenori Mine, Tomoyuki Kakuta, and Akira Ono.
\newblock Reciprocal recommendation for job matching with bidirectional
  feedback.
\newblock In {\em 2013 Second IIAI International Conference on Advanced Applied
  Informatics}, pages 39--44. IEEE, 2013.

\bibitem{mladenov2020optimizing}
Martin Mladenov, Elliot Creager, Omer Ben-Porat, Kevin Swersky, Richard Zemel,
  and Craig Boutilier.
\newblock Optimizing long-term social welfare in recommender systems: A
  constrained matching approach.
\newblock {\em arXiv preprint arXiv:2008.00104}, 2020.

\bibitem{nemhauser1978analysis}
George~L Nemhauser, Laurence~A Wolsey, and Marshall~L Fisher.
\newblock An analysis of approximations for maximizing submodular set
  functions.
\newblock {\em Mathematical programming}, 14(1):265--294, 1978.

\bibitem{parisi2014policy}
Simone Parisi, Matteo Pirotta, Nicola Smacchia, Luca Bascetta, and Marcello
  Restelli.
\newblock Policy gradient approaches for multi-objective sequential decision
  making.
\newblock In {\em 2014 International Joint Conference on Neural Networks
  (IJCNN)}, pages 2323--2330. IEEE, 2014.

\bibitem{Paterek2007ImprovingRS}
Arkadiusz Paterek.
\newblock Improving regularized singular value decomposition for collaborative
  filtering.
\newblock 2007.

\bibitem{pizzato2010learning}
Luiz Pizzato, Thomas Chung, Tomasz Rej, Irena Koprinska, Kalina Yacef, and Judy
  Kay.
\newblock Learning user preferences in online dating.
\newblock In {\em Proceedings of the Preference Learning (PL-10) Tutorial and
  Workshop, European Conference on Machine Learning and Principles and Practice
  of Knowledge Discovery in Databases (ECML PKDD)}. Citeseer, 2010.

\bibitem{pizzato2013recommending}
Luiz Pizzato, Tomasz Rej, Joshua Akehurst, Irena Koprinska, Kalina Yacef, and
  Judy Kay.
\newblock Recommending people to people: the nature of reciprocal recommenders
  with a case study in online dating.
\newblock {\em User Modeling and User-Adapted Interaction}, 23(5):447--488,
  2013.

\bibitem{radlinski2009redundancy}
Filip Radlinski, Paul~N Bennett, Ben Carterette, and Thorsten Joachims.
\newblock Redundancy, diversity and interdependent document relevance.
\newblock In {\em ACM SIGIR Forum}, volume~43, pages 46--52. ACM New York, NY,
  USA, 2009.

\bibitem{ribeiro2012pareto}
Marco~Tulio Ribeiro, Anisio Lacerda, Adriano Veloso, and Nivio Ziviani.
\newblock Pareto-efficient hybridization for multi-objective recommender
  systems.
\newblock In {\em Proceedings of the sixth ACM conference on Recommender
  systems}, pages 19--26, 2012.

\bibitem{richardson2007predicting}
Matthew Richardson, Ewa Dominowska, and Robert Ragno.
\newblock Predicting clicks: estimating the click-through rate for new ads.
\newblock In {\em Proceedings of the 16th international conference on World
  Wide Web}, pages 521--530, 2007.

\bibitem{robertson1977probability}
Stephen~E Robertson.
\newblock The probability ranking principle in ir.
\newblock {\em Journal of documentation}, 1977.

\bibitem{rodriguez2012multiple}
Mario Rodriguez, Christian Posse, and Ethan Zhang.
\newblock Multiple objective optimization in recommender systems.
\newblock In {\em Proceedings of the sixth ACM conference on Recommender
  systems}, pages 11--18, 2012.

\bibitem{roth1999redesign}
Alvin~E Roth and Elliott Peranson.
\newblock The redesign of the matching market for american physicians: Some
  engineering aspects of economic design.
\newblock {\em American economic review}, 89(4):748--780, 1999.

\bibitem{shah1973distribution}
BK~Shah.
\newblock Distribution of sum of independent integer valued random-variables.
\newblock {\em American Statistician}, 27(3):123--124, 1973.

\bibitem{singh2018fairness}
Ashudeep Singh and Thorsten Joachims.
\newblock Fairness of exposure in rankings.
\newblock In {\em Proceedings of the 24th ACM SIGKDD International Conference
  on Knowledge Discovery \& Data Mining}, pages 2219--2228, 2018.

\bibitem{singh2019policy}
Ashudeep Singh and Thorsten Joachims.
\newblock Policy learning for fairness in ranking.
\newblock In {\em Advances in Neural Information Processing Systems}, pages
  5426--5436, 2019.

\bibitem{sinkhorn1967concerning}
Richard Sinkhorn and Paul Knopp.
\newblock Concerning nonnegative matrices and doubly stochastic matrices.
\newblock {\em Pacific Journal of Mathematics}, 21(2):343--348, 1967.

\bibitem{svore2011learning}
Krysta~M Svore, Maksims~N Volkovs, and Christopher~JC Burges.
\newblock Learning to rank with multiple objective functions.
\newblock In {\em Proceedings of the 20th international conference on World
  wide web}, pages 367--376, 2011.

\bibitem{ting2016transfer}
Chia-Hsin Ting, Hung-Yi Lo, and Shou-De Lin.
\newblock Transfer-learning based model for reciprocal recommendation.
\newblock In {\em Pacific-Asia Conference on Knowledge Discovery and Data
  Mining}, pages 491--502. Springer, 2016.

\bibitem{tu2014online}
Kun Tu, Bruno Ribeiro, David Jensen, Don Towsley, Benyuan Liu, Hua Jiang, and
  Xiaodong Wang.
\newblock Online dating recommendations: matching markets and learning
  preferences.
\newblock In {\em Proceedings of the 23rd international conference on world
  wide web}, pages 787--792, 2014.

\bibitem{vargas2011rank}
Sa{\'u}l Vargas and Pablo Castells.
\newblock Rank and relevance in novelty and diversity metrics for recommender
  systems.
\newblock In {\em Proceedings of the fifth ACM conference on Recommender
  systems}, pages 109--116, 2011.

\bibitem{wang2010learning}
Fei Wang, Ping Li, and Arnd~Christian Konig.
\newblock Learning a bi-stochastic data similarity matrix.
\newblock In {\em 2010 IEEE International Conference on Data Mining}, pages
  551--560. IEEE, 2010.

\bibitem{wang2020fairness}
Lequn Wang and Thorsten Joachims.
\newblock Fairness and diversity for rankings in two-sided markets.
\newblock {\em arXiv preprint arXiv:2010.01470}, 2020.

\bibitem{xia2015reciprocal}
Peng Xia, Benyuan Liu, Yizhou Sun, and Cindy Chen.
\newblock Reciprocal recommendation system for online dating.
\newblock In {\em 2015 IEEE/ACM International Conference on Advances in Social
  Networks Analysis and Mining (ASONAM)}, pages 234--241. IEEE, 2015.

\bibitem{10.1145/2808797.2809282}
Peng Xia, Benyuan Liu, Yizhou Sun, and Cindy Chen.
\newblock Reciprocal recommendation system for online dating.
\newblock In {\em Proceedings of the 2015 IEEE/ACM International Conference on
  Advances in Social Networks Analysis and Mining 2015}, ASONAM '15, page
  234–241, New York, NY, USA, 2015. Association for Computing Machinery.

\bibitem{yu2011reciprocal}
Hongtao Yu, Chaoran Liu, and Fuzhi Zhang.
\newblock Reciprocal recommendation algorithm for the field of recruitment.
\newblock {\em Journal of Information \& Computational Science},
  8(16):4061--4068, 2011.

\bibitem{zafar2017fairness}
Muhammad~Bilal Zafar, Isabel Valera, Manuel Gomez~Rodriguez, and Krishna~P
  Gummadi.
\newblock Fairness beyond disparate treatment \& disparate impact: Learning
  classification without disparate mistreatment.
\newblock In {\em Proceedings of the 26th international conference on world
  wide web}, pages 1171--1180, 2017.

\bibitem{zhang2014fairness}
Chongjie Zhang and Julie~A Shah.
\newblock Fairness in multi-agent sequential decision-making.
\newblock {\em Advances in Neural Information Processing Systems},
  27:2636--2644, 2014.

\bibitem{zhang2011intrank}
Lizi Zhang, Hui Fang, Wee~Keong Ng, and Jie Zhang.
\newblock Intrank: Interaction ranking-based trustworthy friend recommendation.
\newblock In {\em 2011IEEE 10th International Conference on Trust, Security and
  Privacy in Computing and Communications}, pages 266--273. IEEE, 2011.

\end{thebibliography}
\bibliographystyle{plain}
\end{document}